%
%
%
\documentclass[10pt,journal,letterpaper,compsoc,onecolumn]{IEEEtran}
\pdfoutput=1
%


%

%
\ifCLASSOPTIONcompsoc
  \usepackage[nocompress]{cite}
\else
\fi

\usepackage{amsfonts}
\usepackage{amsmath}
\usepackage{amsthm}
\usepackage{amssymb}
\usepackage{graphicx}
\usepackage{algorithm}
\usepackage{algorithmic}
\usepackage{setspace}

%

\hyphenation{op-tical net-works semi-conduc-tor}
\newtheorem{theorem}{Theorem}

\newtheoremstyle{example}{\topsep}{\topsep}
{}
{}
{\bfseries}
{}
{  }
{\thmname{#1}\thmnumber{ #2.}\thmnote{ (#3)}}
\theoremstyle{example}
\newtheorem{example}{Example}

\newtheorem{lemma}[theorem]{Lemma}

\begin{document}
\doublespacing
%
\title{Minimal memory requirements for pearl-necklace encoders of quantum convolutional codes}
%
%
%
%

\author{Monireh~Houshmand,~Saied~Hosseini-Khayat,~and~Mark~M.~Wilde
\IEEEcompsocitemizethanks{\IEEEcompsocthanksitem M.~Houshmand and S.~Hosseini-Khayat are with
the Department of Electrical Engineering, Ferdowsi University of Mashhad, Iran. \protect\\
E-mail: monirehhoushmand@gmail.com; saied.hosseini@gmail.com
\IEEEcompsocthanksitem M.~M.~Wilde is
a postdoctoral fellow with the School of Computer Science, McGill University,
Montreal, Qu\'{e}bec, Canada. \protect\\
E-mail: mark.wilde@mcgill.ca}
\thanks{}}

\IEEEcompsoctitleabstractindextext{%
\begin{abstract}
One of the major goals in quantum information processing is to reduce the overhead
associated with the practical implementation of quantum protocols,
and often, routines for quantum error correction
account for most of this overhead. A
particular technique for quantum error correction that may be useful for protecting a
stream of quantum information is quantum
convolutional coding. The encoder for a quantum convolutional code has a representation as
a convolutional encoder or as a \textquotedblleft pearl-necklace\textquotedblright\ encoder. In the pearl-necklace representation, it has
not been particularly clear in the research literature how much quantum memory
such an encoder would require for implementation. Here, we offer an algorithm
that answers this question. The algorithm first constructs a weighted,
directed acyclic graph where each vertex of the graph corresponds to a gate
string in the pearl-necklace encoder, and each path through the graph represents a path through non-commuting gates in the encoder. We show that
the weight of the longest path through the graph is equal to the minimal amount of memory
needed to implement the encoder. A dynamic programming search through this
graph determines the longest path. The running time for the construction of the graph
and search through it is quadratic in the number of gate strings in the pearl-necklace
encoder.
\end{abstract}

\begin{keywords}
quantum communication, quantum convolutional codes, quantum shift register circuits, quantum error correction, quantum memory
\end{keywords}}

\maketitle

\IEEEdisplaynotcompsoctitleabstractindextext

%
Quantum information science~\cite{book2000mikeandike} is an interdisciplinary field combining quantum physics, mathematics and computer science.
Quantum computers give dramatic speedups over classical ones for tasks such as integer factorization~\cite{Shor-factorization} and
database search~\cite{grover-search}. Two parties can also securely agree on a secret key by exploiting certain features of quantum mechanics~\cite{qkd}.

A quantum system intracts with its environment, and this interaction
inevitably alters the state of the quantum system, which causes loss
of information encoded in it. Quantum error correction~\cite{qecbook,thesis97gottesman,book2000mikeandike} offers a way to combat this noise---it
is the fundamental theory underpinning the practical
realization of quantum computation and quantum communication. The routines associated with it will account for most
of the overhead in the implementation of several practical quantum protocols. Thus, any reduction in the overhead or
resources for implementing quantum error correction should aid in building a practical quantum system.
One example of such a resource is the size of a quantum memory needed to implement the routines of
quantum error correction.

A quantum convolutional code is a particular quantum code that protects a stream of quantum information communicated
over a quantum channel~\cite{PhysRevLett.91.177902,ieee2007forney}. These codes
are inspired by their classical counterparts~\cite{book1999conv} and inherit many of their properties:
 they admit a mathematical description in terms of a parity
check matrix of binary polynomials or binary rational functions and have a memory structure.
They also have low-complexity encoding and decoding circuits and an efficient maximum likelihood error estimation procedure helps estimate errors under the assumption that they are transmitted over a memoryless channel~\cite{arx2007poulin,PhysRevLett.91.177902,arxiv2004olliv}.

%

One representation of the encoder for a quantum convolutional code has a simple
form~\cite{PhysRevLett.91.177902}. It consists of a single unitary repeatedly
applied to a stream of quantum data---we call such a form for the encoder a
\textit{convolutional encoder} (see Figure~\ref{fig:qcc-pearl-necklace}(a)).
An important practical concern for the implementation of an encoder is the
amount of quantum storage or memory it requires. The representation of the
encoder in the convolutional form allows one to determine this quantity in a
straightforward manner: it is equal to the number of qubits that are fed back
into the next iteration of the unitary that acts on the stream. For example,
the convolutional encoder in
Figure~\ref{fig:example-technique}(c)\ requires three memory qubits for
implementation. Ollivier and Tillich pursued this approach for encoding in their early work
on quantum convolutional codes~\cite{PhysRevLett.91.177902,arxiv2004olliv},
and more recently, Poulin, Ollivier, and Tillich exploited this approach in
their construction of quantum turbo codes~\cite{arx2007poulin}.
They randomly generated and filtered
Clifford unitaries at random to act as the convolutional encoders for the
constituent quantum convolutional codes of a quantum turbo code. In this
case, it was straightforward to determine the memory required to implement a
quantum turbo code because they represented the two constituent encoders for
the quantum turbo code in the convolutional form.

\begin{figure}
[ptb]
\begin{center}
\includegraphics[
natheight=3.499900in,
natwidth=11.000400in,
width=6.0502in
]%
{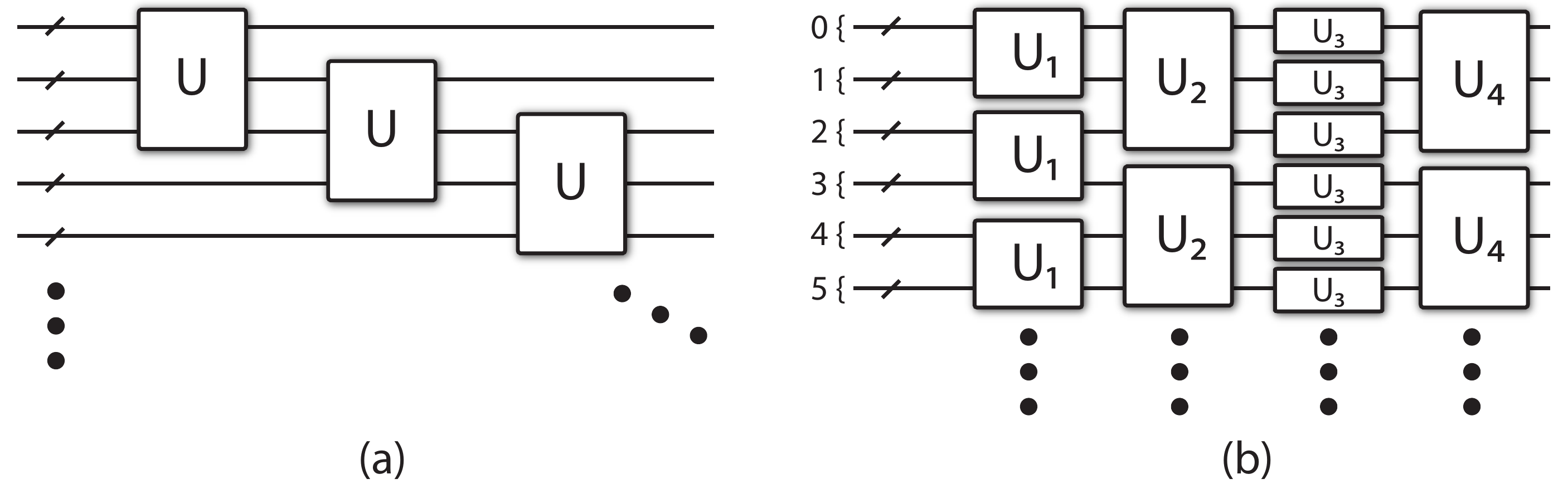}%
\caption{Two different representations of the encoder for a quantum
convolutional code. (a) Representation of the encoder as a convolutional
encoder. (b) Representation of the encoder as a pearl-necklace encoder. The
numbering at the inputs of the pearl-necklace encoder indicates our
convention for frame indexing.}%
\label{fig:qcc-pearl-necklace}%
\end{center}
\end{figure}
An alternate representation for the encoder of a quantum convolutional code
consists of several \textquotedblleft strings\textquotedblright\ of the same unitary
applied to the quantum data stream (see Figure~\ref{fig:qcc-pearl-necklace}%
(b)). This representation of the encoder is known as a \textit{pearl-necklace
encoder} due to its striking similarity to a pearl necklace (each string of
unitaries corresponds to one bead of a pearl necklace in this analogy).
Ollivier and Tillich coined this term and realized the importance of this
structure early on~\cite{PhysRevLett.91.177902,arxiv2004olliv}, while Grassl
and R\"{o}tteler (GR) later constructed detailed encoding algorithms for encoding
quantum convolutional codes with pearl-necklace encoders~\cite{isit2006grassl}%
. The algorithm consists of a sequence
of elementary encoding operations. Each of these
elementary encoding operations corresponds to a gate string in the pearl-necklace encoder.
 Grassl and R\"{o}tteler then showed how to produce a quantum convolutional
code from two dual-containing classical binary convolutional codes (much like
the Calderbank-Shor-Steane or CSS approach~\cite{PhysRevA.54.1098,PhysRevLett.77.793}%
) and then constructed a pearl-necklace encoder for the produced code~\cite{ieee2007grassl}.
Later work demonstrated how to
produce an entanglement-assisted quantum convolutional code from two
\textit{arbitrary} classical binary convolutional codes and, in some cases, it
is possible to construct a pearl-necklace encoder for the produced quantum
code\cite{arx2007wildeCED,arx2007wildeEAQCC,arx2008wildeUQCC,arx2008wildeGEAQCC,pra2009wilde}. The advantage of the Grassl-R\"{o}tteler (GR)
and the subsequent entanglement-assisted approach is that the quantum code
designer can choose high-performance classical convolutional codes to import
for use as high-performance quantum convolutional codes.

The representation of a GR pearl-necklace encoder as a convolutional encoder
was originally somewhat unclear, but a recent paper demonstrated how to
translate between these two representations\footnote{The author of
Ref.~\cite{PhysRevA.79.062325}\ called a convolutional encoder a
\textquotedblleft quantum shift register circuit\textquotedblright\ to make
contact with the language of classical shift registers, but the two terms are
essentially interchangeable.}~\cite{PhysRevA.79.062325}. There, the author
exploited notions from linear system theory to show how convolutional encoders
realize the transformations in the GR pearl-necklace
encoders.\footnote{Perhaps Ref.~\cite{PhysRevA.79.062325} is the most explicit
work to show why quantum convolutional codes are in fact \textquotedblleft
convolutional.\textquotedblright} An important contribution of
Ref.~\cite{PhysRevA.79.062325}\ was to clarify the notion of quantum memory in
a GR pearl-necklace encoder, given that such a notion is not explicitly clear
in the pearl-necklace representation. In fact, Ref.~\cite{PhysRevA.79.062325}
demonstrates that a particular convolutional encoder for the
Forney-Grassl-Guha code~\cite{ieee2007forney} requires fives frame of memory
qubits, whereas an earlier analysis of Grassl and R\"{o}tteler suggested that
this encoder would require only two frames~\cite{isit2006grassl}.

The goal of the present paper is to outline an algorithm that computes the memory requirements of a GR pearl-necklace
encoder.\footnote{Ref.~\cite{PhysRevA.79.062325} suggested a formula as an
upper bound on the memory requirements for the Grassl-R\"{o}tteler pearl-necklace encoder of a Calderbank-Shor-Steane\ (CSS) code, but subsequent
analysis demonstrates this upper bound does not hold for all encoders. The
algorithm in the present paper is able to determine the exact memory
requirements of a given Grassl-R\"{o}tteler pearl-necklace encoder.} Our approach considers a class of potential
convolutional encoders that realize the same transformation that a GR
pearl-necklace encoder does. The ideas are in the same
spirit as those in Ref.~\cite{PhysRevA.79.062325}, but the approach here is
different. The algorithm to compute the memory requirements then consists of two parts:

\begin{enumerate}
\item It first constructs a weighted, directed acyclic graph where each vertex
of the graph corresponds to a string of gates in the GR pearl-necklace
encoder. The graph features a directed edge from one vertex to another if the two
corresponding gate strings do not commute, and the weight of a
directed edge depends on the degrees of the two corresponding gate strings.
 Thus, the graph details paths through non-commuting gates                   in the
pearl-necklace encoder. The complexity for constructing this graph is
quadratic in the number of gate strings in the GR pearl-necklace encoder.

\item We show that the longest path of the graph corresponds
to the minimal amount of memory that a convolutional encoder requires, and the
final part of the algorithm finds this longest path through the graph with
dynamic programming~\cite{cormen}. This final part has complexity linear in
the number of vertices and edges in the graph (or, equivalently, quadratic in the number of gate strings in the pearl-necklace encoder)
because the graph is directed
and acyclic.
\end{enumerate}
In this paper, we focus on encoding CSS\ quantum convolutional codes, for
which each elementary operation corresponds to a string of CNOT\ gates in a
pearl-necklace encoder (see Section~VI of Ref.~\cite{PhysRevA.79.062325}). A later work addresses the general case of
non-CSS\ quantum convolutional codes~\cite{general-paper}.
We begin with a particular pearl-necklace encoder of a quantum convolutional code and determine the minimal amount of
quantum memory needed to implement it as a convolutional encoder.

We structure this work as follows. First we review some basic concepts
from quantum mechanics.
Section~\ref{sec:def-not}\ then establishes some definitions and notation that
we employ throughout this paper. Our main contribution begins in
Section~\ref{sec:memory}. We first determine the memory requirements for some
simple examples of pearl-necklace encoders. We then build up to more
complicated examples, by determining the memory required for convolutional
encoders with CNOT gates that are unidirectional, unidirectional in the
opposite direction, and finally with arbitrary direction. The direction is
with respect to the source and target qubits of the CNOT gates in the
convolutional encoder (for example, the convolutional encoder in
Figure~\ref{fig:example-1}(b) is unidirectional). The final section of the paper concludes with a summary and
suggestions for future research.

\section{Quantum states and gates}
The basic data unit in a quantum computer is the qubit. A \emph{qubit} is a unit vector in a two dimensional Hilbert space, $\mathcal{H}_2$ for which a
particular basis, denoted by $\left\vert 0\right\rangle$,
 $\left\vert 1\right\rangle$, has been fixed. The basis states $\left\vert 0\right\rangle$ and $\left\vert 1\right\rangle$ are quantum
analogues of classical 0 and 1 respectively. Unlike classical bits, qubits can be in a superposition of $\left\vert 0\right\rangle$ and $\left\vert 1\right\rangle$  such as $a\left\vert 0\right\rangle+b\left\vert 1\right\rangle$  where $a$ and $b$ are complex numbers such that
$|a|^2 + |b|^2 = 1$. If such a superposition is measured with respect to the basis $\left\vert 0\right\rangle$,$\left\vert 1\right\rangle$, then
$\left\vert 0\right\rangle$ is observed with probability $|a|^2$ and $\left\vert 1\right\rangle$ is observed with probability $|b|^2.$

An $n$-qubit register is a quantum system whose state space is $\mathcal{H}_2^n$. Given
the computational basis $\{\left\vert 0\right\rangle,\left\vert 1\right\rangle\}$ for $\mathcal{H}_2$, the basis states of this register are in the following set:
\[\{\left\vert i_1\right\rangle \otimes \left\vert i_2\right\rangle \otimes \cdots \otimes
 \left\vert i_n\right\rangle; i_1,i_2,\cdots,i_n=0,1                      \},\]
or equivalently
\[\{   \left\vert i_1i_2\cdots i_n\right\rangle ;  i_1,i_2,\cdots,i_n=0,1              \}.\]
The state $\left\vert \psi \right\rangle$ of an $n$-qubit register is a vector in $2^n$-dimensional space:
\[\left\vert \psi \right\rangle=\sum_{i_1,i_2,\cdots,i_n=0,1}a_{i_1,i_2,\cdots,i_n}\left\vert i_1\right\rangle \otimes
 \left\vert i_2\right\rangle \otimes \cdots \otimes
 \left\vert i_n\right\rangle,\]
 where
 \[\sum_{i_1,i_2,\cdots,i_n=0,1}|a_{i_1,i_2,\cdots,i_n}|^2=1.\]

 The phenomenon of \emph{quantum entanglement}~\cite{book2000mikeandike}, which has no classical analogue,
has been recognized as an important physical resource in many areas
of quantum computation and quantum information science.
 A multi-qubit quantum state $\left\vert \psi \right\rangle$ is said to be entangled if it cannot be written as
the tensor product $\left\vert \psi \right\rangle=\left\vert \phi_1 \right\rangle \otimes \left\vert \phi_2\right \rangle $ of two pure states. For example, the EPR pair
shown below is an entangled quantum state:
\[\left\vert \Phi  \right\rangle=(\left\vert 00 \right\rangle+\left\vert 11 \right\rangle)/\surd{2}.\]

In other words, in the case of an entangled state, the qubits are linked in a way such that one cannot describe the quantum state of a constituent of the system independent of its counterparts, even if the individual qubits are spatially separated.

As with classical circuits, quantum operations can be performed by networks of gates.
 Every quantum gate is a linear transformation represented by a unitary matrix, defined on an $n$-qubit Hilbert
space.
A matrix $U$ is \emph{unitary} if
$UU^{\dagger} = I$, where $U^{\dagger}$ is the conjugate transpose of the matrix $U$. Since any unitary operation has an inverse, any quantum gate is reversible,
meaning that given the state of a set of output qubits, it is possible to determine the state of its corresponding
set of input qubits.

Some examples
of useful single-qubit gates are the elements of the Pauli set $\Pi.$ The set $\Pi=\{I,X,Y,Z\}$ consists of the Pauli operators:
\[
I\equiv%
\begin{bmatrix}
1 & 0\\
0 & 1
\end{bmatrix}
,\ X\equiv%
\begin{bmatrix}
0 & 1\\
1 & 0
\end{bmatrix}
,\ Y\equiv%
\begin{bmatrix}
0 & -i\\
i & 0
\end{bmatrix}
,\ Z\equiv%
\begin{bmatrix}
1 & 0\\
0 & -1
\end{bmatrix}
.
\]
$I$ is the identity transformation, $X$ is a bit flip (NOT), $Z$ is a phase flip operation, and $Y$
is a combination of both. Two other important single-qubit transformations are the Hadamard gate $H$ and phase gate $P$ where
\[
H\equiv%
\frac{1}{\sqrt{2}}\begin{bmatrix}
1 & 1\\
1 & -1
\end{bmatrix},\]
\[
P\equiv
\begin{bmatrix}
1 & 0\\
0 & i
\end{bmatrix}
.
\]
The $n$-qubit Pauli group $\Pi^{n}$ is defined as $n$-fold tensor products
of Pauli operators:%
\begin{equation}
\Pi^{n}=\left\{  e^{i\phi}A_{1}\otimes\cdots\otimes A_{n}:\forall j\in\left\{
1,\ldots,n\right\}  ,\ \ A_{j}\in\Pi,\ \ \phi\in\left\{  0,\pi/2,\pi
,3\pi/2\right\}  \right\}  .
\end{equation}
The controlled-NOT gate (CNOT) gate is a two-qubit gate. The first qubit serves as a control and the second as a target. CNOT performs the NOT operation
 on the target qubit if the
control qubit is $\left\vert 1\right\rangle$ and otherwise leaves it unchanged. In other words, the second output is the XOR of the target and control qubit.
 The matrix representation of the CNOT gate is
\[
\text{CNOT}\equiv%
\begin{bmatrix}
1 & 0& 0&0\\
0 & 1&0&0\\
0&0&0&1\\
0&0&1&0
\end{bmatrix}.\]
Figure~\ref{fig:description-of-cnot}(a) shows the circuit representation of the CNOT gate. The non-commutativity of CNOT gates is the most important concept needed to understand this paper. Two CNOT gates do not commute if the index of the source qubit of
one is the same as the index of the target of the other. Figure \ref{fig:description-of-cnot} (b) and (c) show this fact. If the input to both circuits is $ \left\vert a\right\rangle \otimes \left\vert b\right\rangle \otimes \left\vert c\right\rangle (a,b,c=0,1),$ the output of the circuit depicted in Figure \ref{fig:description-of-cnot} (b) is $ \left\vert a\right\rangle \otimes \left\vert a\oplus b\right\rangle \otimes \left\vert b \oplus c\right\rangle,$ while the output of the circuit depicted in Figure \ref{fig:description-of-cnot} (c) is $ \left\vert a\right\rangle \otimes \left\vert a\oplus b\right\rangle \otimes \left\vert a \oplus b \oplus c\right\rangle.$ Therefore the third qubits of two circuits are different when $a=1.$
 \begin{figure}
[ptb]
\begin{center}
\includegraphics[
natheight=3.499900in,
natwidth=11.000400in,
width=6.0502in
]%
{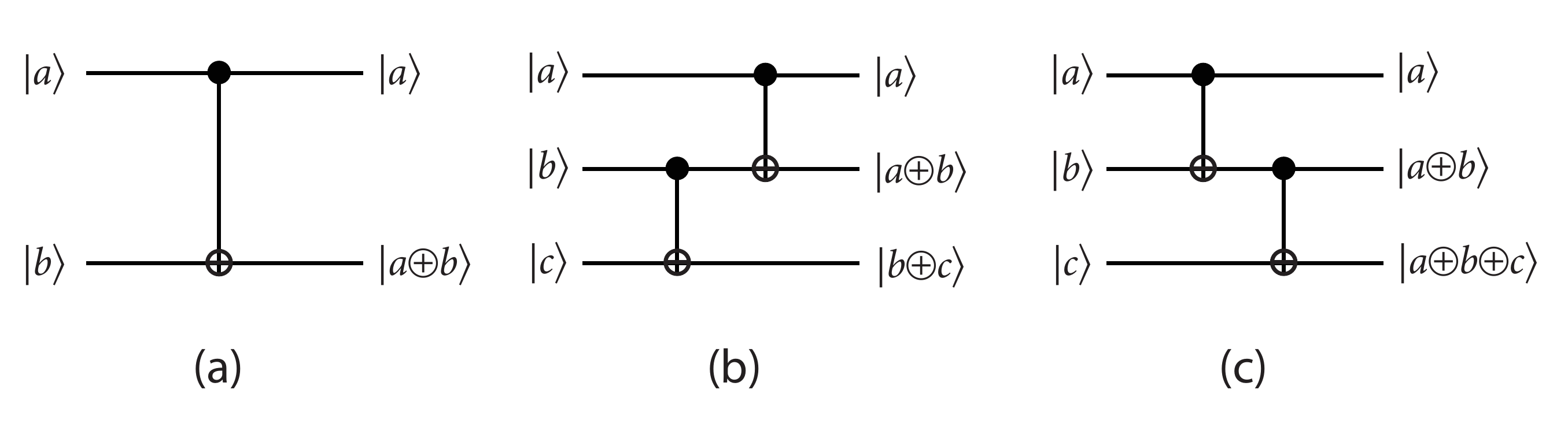}
\caption{(a) the circuit representation of CNOT gate. (b) shows that when index of the source qubit of
one gate is the same as the index of the target of the other, two CNOTs do not commute.}
\label{fig:description-of-cnot}%
\end{center}
\end{figure}

Thus, there are two
kinds of non-commutativity in which we are interested for two gate strings in
a pearl-necklace encoder:

\begin{enumerate}
\item \textit{Source-target non-commutativity} occurs when the index of each
source qubit in the first gate string is the same as the index of each target
qubit in the second gate string. This type of non-commutativity
occurs in the following two gate strings:%
\[
\overline{\text{CNOT}}(i,j)(D^{l_1})\ \overline{\text{CNOT}}(k,i)(D^{l_2}),
\]
where we order the gate strings from left to right.

\item \textit{Target-source non-commutativity} occurs when the index of each
target qubit in the first gate string is the same as the index of each source
qubit in the second gate string. For example, this type of non-commutativity
occurs in the following two gate strings:%
\[
\overline{\text{CNOT}}(i,j)(D^{l_1})\ \overline{\text{CNOT}}(j,k)(D^{l_2}).
\]

\end{enumerate}

More generally, if $U$ is a gate that operates on a single qubit where
\[
U\equiv%
\begin{bmatrix}
u_{00} & u_{01}\\
u_{10}&u_{11}
\end{bmatrix},\]
then the controlled-U gate is a gate that operates on two qubits in such a way that the first qubit serves as a control and the second as a target.
 This gate performs the unitary $U$ on the target qubit if the
control qubit is $\left\vert 1\right\rangle$ and otherwise leaves it unchanged. The matrix representation of the controlled-U gate is:
\[
\text{Controlled-U}\equiv%
\begin{bmatrix}
1 & 0& 0 &0\\
0&1& 0& 0\\
0& 0& u_{00}& u_{01}\\
0& 0& u_{10}& u_{11}
\end{bmatrix}.\]

A special type of unitary matrix which is often used in the encoding circuit of a quantum error correction code is called a
Clifford operation~\cite{book2000mikeandike}. A Clifford operation $U$\ is one that preserves elements
of the Pauli group under conjugation:\ $A\in\Pi^{n}\Rightarrow UAU^{\dag}%
\in\Pi^{n}$. The CNOT\ gate, the Hadamard gate $H$, and the phase gate $P$
suffice to implement any unitary matrix in the Clifford group
\cite{thesis97gottesman}.

\section{Definitions and notation}

\label{sec:def-not}We first establish some definitions and notation before
proceeding with the main development of the paper.

Our convention for numbering the frames in a pearl-necklace encoder is from
\textquotedblleft top\textquotedblright\ to \textquotedblleft
bottom.\textquotedblright\ In contrast, our convention for numbering the
frames upon which the unitary of a convolutional encoder acts is from
\textquotedblleft bottom\textquotedblright\ to \textquotedblleft
top.\textquotedblright\ Figure~\ref{fig:qcc-pearl-necklace}(b)\ illustrates
the former convention for a pearl-necklace encoder, while Figure~\ref{fig:example-1}%
(b) illustrates the later convention for a convolutional encoder. These conventions
are useful for our analysis later on.

We now establish conventions for indexing gates in pearl-necklace encoders.
Let $s_{k}$ and $t_{k}$ denote the frame index of the respective source and
target qubits of a gate in the $k^{\text{th}}$ gate string of a pearl-necklace
encoder. For example, consider the pearl-necklace encoder in
Figure~\ref{fig:example-technique}(a) that has two gate strings. The index
$k=1$ for the left gate string, and $k=2$ for the right gate string. The
second CNOT\ gate in the $k=1$ gate string has $s_{1}=1$ and $t_{1}=1$. The
third CNOT\ gate in the $k=2$ gate string has $s_{2}=2$ and $t_{2}=3$.

We also require some conventions for indexing gates in a convolutional
encoder. Let $\sigma_{k}$ and $\tau_{k}$ denote the frame index of the
respective source and target qubits of the $k^{\text{th}}$ gate in a
convolutional encoder. For example, consider the convolutional encoder in
Figure~\ref{fig:example-1}(b). The third gate from the left has $k=3$,
$\sigma_{3}=2$, and $\tau_{3}=0$.

Whether referring to a pearl-necklace encoder or a convolutional encoder, the
notation CNOT$(i,j)\left(  s,t\right)  $ denotes a CNOT gate from qubit $i$ in
frame $s$ to qubit $j$ in frame $t$. We employ this notation extensively in
what follows. The notation $\overline{\text{CNOT}}\left(  i,j\right)  \left(
D^{l}\right)  $ refers to a string of gates in a pearl-necklace encoder and
denotes an infinite, repeated sequence of CNOT gates from qubit $i$ to qubit
$j$ in every frame where qubit $j$ is in a frame delayed by $l$. For example,
the left string of gates in Figure~\ref{fig:example-technique}(a) corresponds
to $\overline{\text{CNOT}}\left(  1,2\right)  \left(  1\right)  $, while the
right string of gates corresponds to $\overline{\text{CNOT}}\left(
1,3\right)  \left(  D\right)  $.

The following two Boolean functions are useful later on in our algorithms for
computing memory requirements:%
\begin{align*}
&  \text{Source-Target}\left(  \overline{\text{CNOT}}(a_{1},b_{1})(D^{l_{1}%
}),\overline{\text{CNOT}}(a_{2},b_{2})(D^{l_{2}})\right)  ,\\
&  \text{Target-Source}\left(  \overline{\text{CNOT}}(a_{1},b_{1})(D^{l_{1}%
}),\overline{\text{CNOT}}(a_{2},b_{2})(D^{l_{2}})\right)  .
\end{align*}
The first function takes two gate strings $\overline{\text{CNOT}}(a_{1}%
,b_{1})(D^{l_{1}})$ and $\overline{\text{CNOT}}(a_{2},b_{2})(D^{l_{2}})$ as
input. It returns TRUE if $\overline{\text{CNOT}}(a_{1},b_{1})(D^{l_{1}})$ and
$\overline{\text{CNOT}}(a_{2},b_{2})(D^{l_{2}})$ have source-target
non-commutativity (i.e., $a_{1}=b_{2}$) and returns FALSE otherwise. The
second function also takes two gate strings $\overline{\text{CNOT}}%
(a_{1},b_{1})(D^{l_{1}})$ and $\overline{\text{CNOT}}(a_{2},b_{2})(D^{l_{2}})$
as input. It returns TRUE if $\overline{\text{CNOT}}(a_{1},b_{1})$ and
$\overline{\text{CNOT}}(a_{2},b_{2})$ have target-source non-commutativity
(i.e., $a_{1}=b_{2}$) and returns FALSE otherwise.

The following succession of $N$\ gate strings realizes a pearl-necklace encoder:
\begin{equation}
\overline{\text{CNOT}}\left(  a_{1},b_{1}\right)  \left(  D^{l_{1}}\right)
\ \overline{\text{CNOT}}\left(  a_{2},b_{2}\right)  \left(  D^{l_{2}}\right)
\ \cdots\ \overline{\text{CNOT}}\left(  a_{N},b_{N}\right)  \left(  D^{l_{N}%
}\right)  .
\end{equation}
Consider the $j^{\text{th}}$ gate string
$\overline{\text{CNOT}}\left(  a_{j},b_{j}\right)  \left(  D^{l_{j}}\right)  $
in the above succession of $N$ gate strings. It is important to consider the
gate strings preceding this one that have source-target non-commutativity with it,
target-source non-commutativity with it, non-negative degree, and negative degree.
This leads to four different subsets $\mathcal{S}_{j}^{+}$, $\mathcal{S}%
_{j}^{-}$, $\mathcal{T}_{j}^{+}$, and $\mathcal{T}_{j}^{-}$ that we define as
follows:
\begin{align*}
\mathcal{S}_{j}^{+}  &  =\{i\mid\text{Source-Target}(\overline{\text{CNOT}%
}\left(  a_{i},b_{i}\right)  \left(  D^{l_{i}}\right)  ,\overline{\text{CNOT}%
}\left(  a_{j},b_{j}\right)  \left(  D^{l_{j}}\right)  )=\text{TRUE}%
,i\in\{1,2,\cdots,j-1\},l_{i}\geq0\},\\
\mathcal{S}_{j}^{-}  &  =\{i\mid\text{Source-Target}(\overline{\text{CNOT}%
}\left(  a_{i},b_{i}\right)  \left(  D^{l_{i}}\right)  ,\overline{\text{CNOT}%
}\left(  a_{j},b_{j}\right)  \left(  D^{l_{j}}\right)  )=\text{TRUE}%
,i\in\{1,2,\cdots,j-1\},l_{i}<0\},\\
\mathcal{T}_{j}^{+}  &  =\{i\mid\text{Target-Source}(\overline{\text{CNOT}%
}\left(  a_{i},b_{i}\right)  \left(  D^{l_{i}}\right)  ,\overline{\text{CNOT}%
}\left(  a_{j},b_{j}\right)  \left(  D^{l_{j}}\right)  )=\text{TRUE}%
,i\in\{1,2,\cdots,j-1\},l_{i}\geq0\},\\
\mathcal{T}_{j}^{-}  &  =\{i\mid\text{Target-Source}(\overline{\text{CNOT}%
}\left(  a_{i},b_{i}\right)  \left(  D^{l_{i}}\right)  ,\overline{\text{CNOT}%
}\left(  a_{j},b_{j}\right)  \left(  D^{l_{j}}\right)  )=\text{TRUE}%
,i\in\{1,2,\cdots,j-1\},l_{i}<0\}.
\end{align*}
The first subset $\mathcal{S}_{j}^{+}$ consists of all the non-negative-degree gate strings preceding gate $j$ that have source-target non-commutativity with it.
The second subset $\mathcal{S}_{j}^{-}$ consists of all the negative-degree gate strings preceding gate $j$ that have source-target non-commutativity with it.
The third subset $\mathcal{T}_{j}^{+}$ consists of all the non-negative-degree gate strings preceding gate $j$ that have target-source non-commutativity with it.
The fourth subset $\mathcal{T}_{j}^{-}$ consists of all the negative-degree gate strings preceding gate $j$ that have target-source non-commutativity with it. We use these subsets extensively in what follows.

\section{Memory requirements for pearl-necklace encoders}

\label{sec:memory}\begin{figure}[ptb]
\begin{center}
\includegraphics[
natheight=8.340300in,
natwidth=13.094100in,
width=5.0518in
]{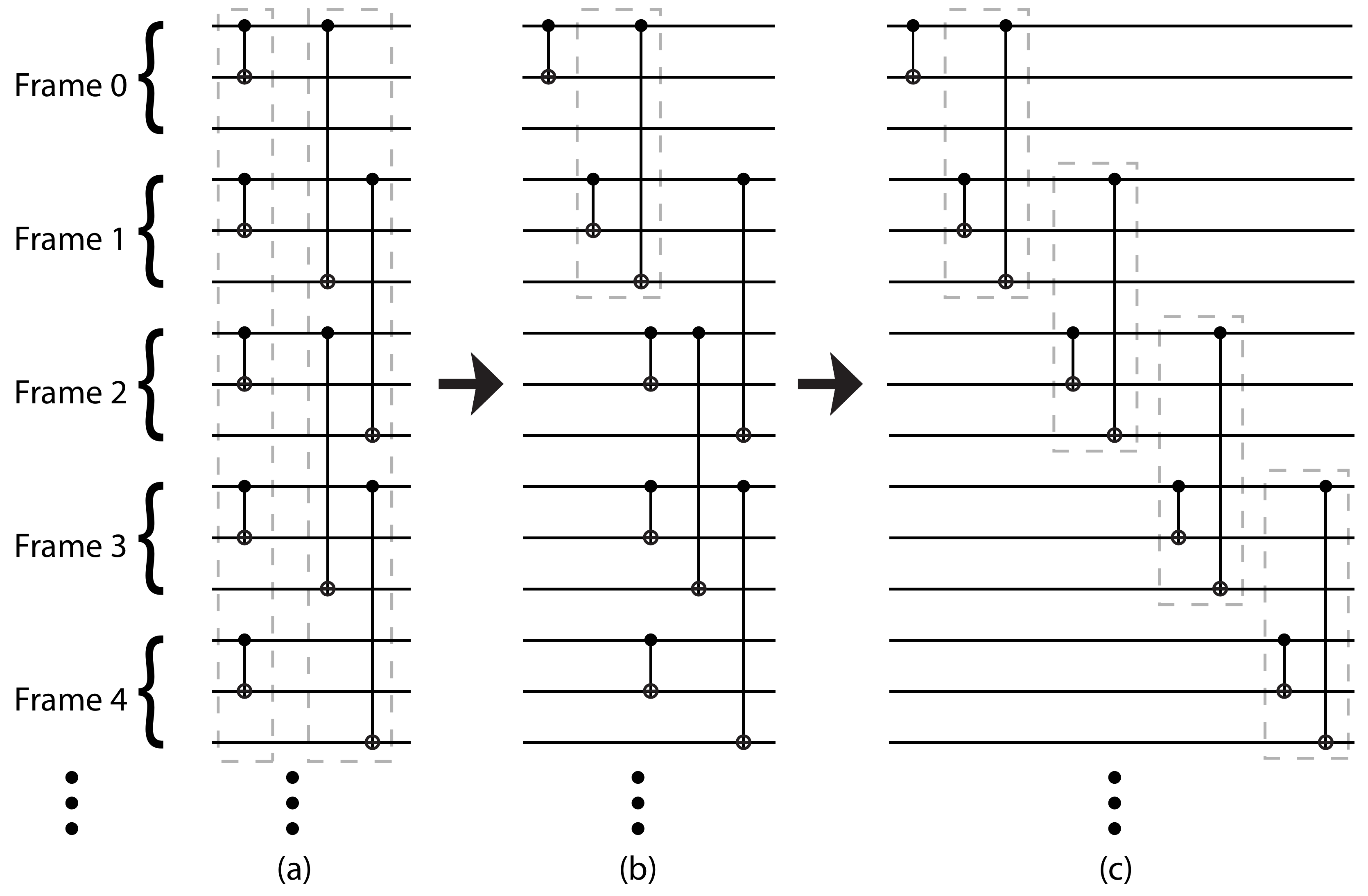}
\end{center}
\caption{Simple depiction of the rearrangement of a pearl-necklace encoder
into a convolutional encoder (note that the technique is more complicated than
depicted for more general cases). (a) The pearl-necklace encoder consists of
the gate strings $\overline{\text{CNOT}}\left(  1,2\right)  \left(  1\right)
\ \overline{\text{CNOT}}\left(  1,3\right)  \left(  D\right)  $. (b) The
rearrangement of the first few gates by shifting the gates below the first three
to the right. (c) A convolutional encoder realization of the pearl-necklace encoder in (a). The repeated application of the procedure in (b) realizes a
convolutional encoder from a pearl-necklace encoder.}
\label{fig:example-technique}
\end{figure}The first step in determining the memory requirements for a GR
pearl-necklace encoder is to rearrange it as a convolutional encoder. There
are many possible correct candidates for the convolutional encoder
(\textquotedblleft correct\textquotedblright\ in the sense that they encode
the same code), but there is a realization that uses a minimal amount of
memory qubits. This idea of rearrangement is in the same spirit as some of the
original ideas of Ollivier and Tillich where there they were trying to
determine non-catastrophic
encoders~\cite{arxiv2004olliv,PhysRevLett.91.177902}, but here we explicitly
apply them to the GR pearl-necklace encoder for the purpose of determining memory
requirements. In order to make a convolutional encoder, we must first find a
set of gates consisting of a single gate for each gate string
in the pearl-necklace encoder such that all of its remaining gates
commute with this set. Then we can shift all the gates remaining in the
pearl-necklace encoder to the right and infinitely repeat this operation on
the remaining gates. Figure~\ref{fig:example-technique} shows a simple example
of the rearrangement of a pearl-necklace encoder $\overline{\text{CNOT}%
}(1,2)\left(  1\right)  \overline{\text{CNOT}}(1,3)(D)$ into a convolutional encoder.

\subsection{The source-target constraint and the target-source constraint}

\label{sec:constraints}We begin by explaining some constraints that apply to
convolutional encoders formed from primitive pearl-necklace encoders.
First consider a pearl-necklace encoder that is a succession of $m$ CNOT gate
strings:
\[
\overline{\text{CNOT}}\left(  a_{1},b_{1}\right)  \left(  D^{l_{1}}\right)
\ \overline{\text{CNOT}}\left(  a_{2},b_{2}\right)  \left(  D^{l_{2}}\right)
\ \cdots\ \overline{\text{CNOT}}\left(  a_{m},b_{m}\right)  \left(  D^{l_{m}%
}\right)  .
\]
Suppose that all the gate strings in the above succession commute with each
other, in the sense that $a_{i}\neq b_{j}$ for all $i\neq j$. Then the
candidates for a convolutional encoder are members of the following set $M$:%
\begin{equation}
M\equiv\left\{  \text{CNOT}(a_{1},b_{1})(s_{1},t_{1})\ \cdots\ \text{CNOT}%
(a_{m},b_{m})(s_{m},t_{m}):t_{i}=s_{i}+l_{i},\ i\in\{1,\dots,m\},\ s_{i}%
\in\left\{  0\right\}  \cup\mathbb{N}\right\}  , \label{eq:M}%
\end{equation}
where $\mathbb{N}=\left\{  1,2,\ldots\right\}  $. All members of $M$ are
correct choices for a convolutional encoder because they produce the required
encoding and because all the remaining gates in the pearl-necklace encoder commute with a particular element of $M$ in all cases. Thus,
there is no constraint on each frame index $s_{i}$ of the source qubit of the
$i^{\text{th}}$ CNOT\ gate.

Now suppose that two CNOT gates in the pearl-necklace encoder do not commute
with each other. Recall that this non-commutativity occurs in two ways:

1)\ Source-target non-commutativity occurs in the following two gate strings:%
\begin{equation}
\overline{\text{CNOT}}\left(  i,j\right)  \left(  D^{l_{1}}\right)
\ \overline{\text{CNOT}}\left(  k,i\right)  \left(  D^{l_{2}}\right)  ,
\label{eq:source-target-encodings}%
\end{equation}
where $j\neq k$.
Potential candidates for a convolutional encoder belong to the following set
$M$:%
\[
M\equiv\left\{  \text{CNOT}(i,j)(s_{1},t_{1})\ \text{CNOT}(k,i)(s_{2}%
,t_{2}):t_{1}=s_{1}+l_{1},\ t_{2}=s_{2}+l_{2},\ s_{1},s_{2}\in\left\{
0\right\}  \cup\mathbb{N}\right\}  ,
\]
though some choices in the set $M$ may not be correct because they ignore the
non-commutativity of the gate strings in (\ref{eq:source-target-encodings}).
In order for the convolutional encoder to be correct, we
should choose the frame indices $s_{1}$ and $t_{2}$ such that all the gates in
the gate string $\overline{\text{CNOT}}\left(  i,j\right)  \left(  D^{l_{1}%
}\right)  $ that remain after CNOT$(i,j)(s_{1},t_{1})$ commute with the gate
CNOT$(k,i)(s_{2},t_{2})$. Otherwise, the chosen convolutional encoder implements
the transformation in (\ref{eq:source-target-encodings}) in the opposite order. An an example, Figure~\ref{fig:example-explanation-source-target} shows the gate strings $\overline{\text{CNOT}}\left(  1,2\right)  \left(  D^{0}\right)
 \overline{\text{CNOT}}\left(  3,1\right)  \left(  D^{1}\right).$
 A correct sample candidate $M$ for the encoder is \[M\equiv \text{CNOT}(1,2)(1,1) \text{CNOT}(3,1)(0,1),\]
 which is shown in the figure. It is obvious that the gates remaining after $\text{CNOT}\left(  1,2\right)\left(  1,1\right)$ (the highlighted gates) commute with $\text{CNOT}(3,1)(0,1).$
 \begin{figure}[ptb]
\begin{center}
\includegraphics[
width=1in
]{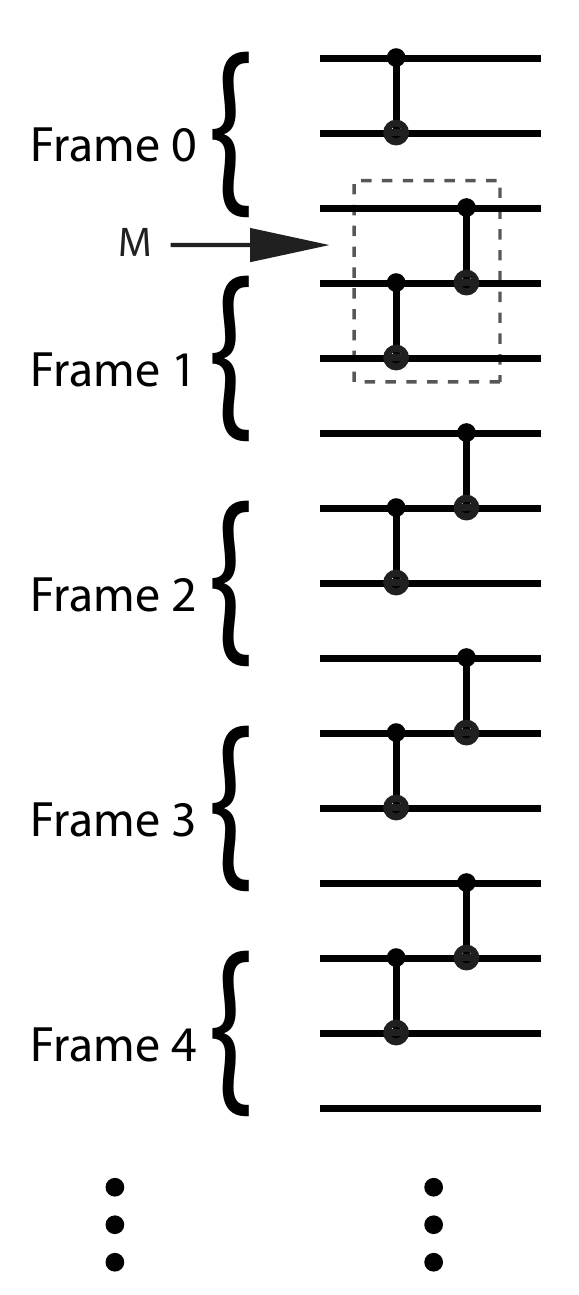}
\end{center}
\caption{A correct sample $M$ for the encoder $\overline{\text{CNOT}}\left(  1,2\right)  \left(  D^{0}\right)
 \overline{\text{CNOT}}\left(  3,1\right)  \left(  D^{1}\right).$ }
\label{fig:example-explanation-source-target}
\end{figure}

The gate $\text{CNOT}(i,j)(t_{2},t_{2}+l_{1})$ is
the only gate in the gate string $\overline{\text{CNOT}}\left(  i,j\right)
\left(  D^{l_{1}}\right)  $ that does not commute with $\text{CNOT}%
(k,i)(s_{2},t_{2})$. Thus, this gate cannot belong to the remaining set of
gates. The set of all gates in the gate string $\overline{\text{CNOT}}\left(
i,j\right)  \left(  D^{l_{1}}\right)  $ remaining after a particular gate CNOT$(i,j)(s_{1}%
,t_{1})$\ is as follows:
\begin{equation}
\{\text{CNOT}(i,j)(s_{1}+d,t_{1}+d):d\in\mathbb{N}\}.\label{eq:ST-set}
\end{equation}
The following inequality determines a restriction
on the source qubit frame index $s_{1}$ such that the gates in the above set both commute with
CNOT$(k,i)(s_{2},t_{2})$ and lead to a correct convolutional encoder:
\begin{equation}
\forall{d}\in\mathbb{N}\ \ \ \ s_{1}+d>t_{2}, \label{eq:ST-set-inequality}
\end{equation}
because these are the remaining gates that we can shift to the right.
Furthermore, the following inequality applies to any correct choice of the first gate in a convolutional encoder
because this gate can be either $\text{CNOT}(i,j)(t_{2},t_{2}+l_{1})$ or any other in the
set in (\ref{eq:ST-set}) that obeys the inequality in (\ref{eq:ST-set-inequality}):
\begin{equation}
s_{1}\geq t_{2}. \label{eq:source-target-constraint}%
\end{equation}
The inequality in (\ref{eq:source-target-constraint}) is the
\textit{source-target constraint} and applies to any correct choice of a convolutional encoder
that implements the transformation in (\ref{eq:source-target-encodings}).

2) The second case is similar to the above case with a few notable changes.
Target-source non-commutativity occurs in the following two gate strings:%
\begin{equation}
\overline{\text{CNOT}}\left(  i,j\right)  \left(  D^{l_{1}}\right)
\ \overline{\text{CNOT}}\left(  j,k\right)  \left(  D^{l_{2}}\right)  .
\label{eq:target-source-encodings}%
\end{equation}
Potential candidates for a convolutional encoder belong to the following set
$M$ where%
\[
M\equiv\left\{  \text{CNOT}(i,j)(s_{1},t_{1})\ \text{CNOT}(j,k)(s_{2}%
,t_{2}):t_{1}=s_{1}+l_{1},\ t_{2}=s_{2}+l_{2},\ s_{1},s_{2}\in\left\{
0\right\}  \cup\mathbb{N}\right\}  ,
\]
though some choices in the set $M$ may not be correct because they ignore the
non-commutativity of the gate strings in (\ref{eq:target-source-encodings}).
In order for the convolutional encoder to be correct, we should choose the frame indices
$t_{1}$ and $s_{2}$ such that the gates in the gate string $\overline
{\text{CNOT}}\left(  i,j\right)  \left(  D^{l_{1}}\right)  $ that remain after
CNOT$(i,j)(s_{1},t_{1})$ commute with CNOT$(j,k)(s_{2},t_{2})$.
Otherwise, the chosen convolutional encoder implements
the transformation in (\ref{eq:target-source-encodings}) in the opposite order. The gate
$\text{CNOT}(i,j)(s_{2}-l_{1},s_{2})$ is the only gate in $\overline
{\text{CNOT}}\left(  i,j\right)  \left(  D^{l_{1}}\right)  $ that does not
commute with $\text{CNOT}(j,k)(s_{2},t_{2})$. Thus, this gate cannot belong to
the remaining set of gates. The set of all gates in the gate string
$\overline{\text{CNOT}}\left(  i,j\right)  \left(  D^{l_{1}}\right)  $
remaining after a particular gate CNOT$(i,j)(s_{1},t_{1})$\ is as follows:
\begin{equation}
\{\text{CNOT}(i,j)(s_{1}+d,t_{1}+d):d\in\mathbb{N}\}.\label{eq:TS-set}
\end{equation}
The following inequality determines a restriction
on the target qubit frame index $t_{1}$ such that the gates in the above set both commute with
CNOT$(j,k)(s_{2},t_{2})$ and lead to a correct convolutional encoder:
\begin{equation}
\forall{d}\in\mathbb{N}\ \ \ \ t_{1}+d>s_{2},\label{eq:TS-set-inequality}
\end{equation}
because these are the remaining gates that we can shift to the right.
Furthermore, the following inequality applies to any correct choice of the first gate in a convolutional encoder
because this gate can be either $\text{CNOT}(i,j)(s_{2}-l_{1},s_{2})$ or any other in the
set in (\ref{eq:TS-set}) that obeys the inequality in (\ref{eq:TS-set-inequality}):
\begin{equation}
t_{1}\geq s_{2}. \label{eq:target-source-constraint}%
\end{equation}
The inequality in (\ref{eq:target-source-constraint}) is the
\textit{target-source constraint} and applies to any correct choice of a convolutional encoder
that implements the transformation in (\ref{eq:target-source-encodings}).

\subsection{Memory requirements for a unidirectional pearl-necklace encoder}

\label{sec:unidirectional-first}We are now in a position to introduce our
algorithms for finding a minimal-memory convolutional encoder that realizes
the same transformation as a pearl-necklace encoder. In this subsection, we
consider the memory requirements for a CSS pearl-necklace encoder with
unidirectional CNOT\ gates (see Figure~\ref{fig:example-1}(b) for an example).
Section~\ref{sec:negative-degree} determines them for a CSS pearl-necklace encoder with
unidirectional CNOT\ gates in the opposite direction, and Section~\ref{sec:arbitrary-CNOT}
determines them for a general CSS pearl-necklace encoder with
CNOT\ gates in an arbitrary direction.

First consider a pearl-necklace encoder that is a sequence of several CNOT
gate strings:%
\[
\overline{\text{CNOT}}\left(  a_{1},b_{1}\right)  \left(  D^{l_{1}}\right)
\ \overline{\text{CNOT}}\left(  a_{2},b_{2}\right)  \left(  D^{l_{2}}\right)
\ \cdots\ \overline{\text{CNOT}}\left(  a_{m},b_{m}\right)  \left(  D^{l_{m}%
}\right)  ,
\]
where all $l_{i}\geq0$ and all the gate strings in the above succession
commute with each other. All members of $M$ in (\ref{eq:M}) are correct
choices for the convolutional encoder, as explained in the beginning of
Section~\ref{sec:constraints}. Though, choosing the same value for each target
qubit frame index $t_{i}$ results in the minimal required memory $L$ where%
\[
L=\max\{l_{1},l_{2},...,l_{m}\}.
\]
A correct, minimal-memory choice for a convolutional encoder is as follows:%
\[
\text{CNOT}(a_{1},b_{1})(l_{1},0)\ \text{CNOT}(a_{2},b_{2})(l_{2}%
,0)\ \cdots\ \text{CNOT}(a_{m},b_{m})(l_{m},0),
\]
where we recall the convention that frames in the convolutional encoder number
from \textquotedblleft bottom\textquotedblright\ to \textquotedblleft
top.\textquotedblright

Now consider two gate strings in a pearl-necklace encoder that have
source-target non-commutativity:%
\begin{equation}
\overline{\text{CNOT}}\left(  i,j\right)  \left(  D^{l_{1}}\right)
\ \overline{\text{CNOT}}\left(  k,i\right)  \left(  D^{l_{2}}\right)  ,
\label{eq:source-target-encodings2}%
\end{equation}
where $l_{1},l_{2}\geq0$. Thus, the source-target constraint in
(\ref{eq:source-target-constraint}) holds for any correct choice of a
convolutional encoder. Choosing $s_{1}=t_{2}$ leads to a minimal-memory
convolutional encoder because any other choice either does not implement the
correct transformation (it violates the source-target constraint) or it uses
more memory than this choice. So a correct, minimal-memory choice for a
convolutional encoder is as follows:%
\[
\text{CNOT}\left(  i,j\right)  \left(  l_{1},0\right)  \ \text{CNOT}\left(
k,i\right)  \left(  l_{1}+l_{2},l_{1}\right)  .
\]
Such a convolutional encoder requires $L$\ frames of memory qubits where%
\[
L=l_{1}+l_{2}.
\]
\begin{figure}[ptb]
\begin{center}
\includegraphics[
natheight=6.106400in,
natwidth=8.753600in,
width=3.5405in
]{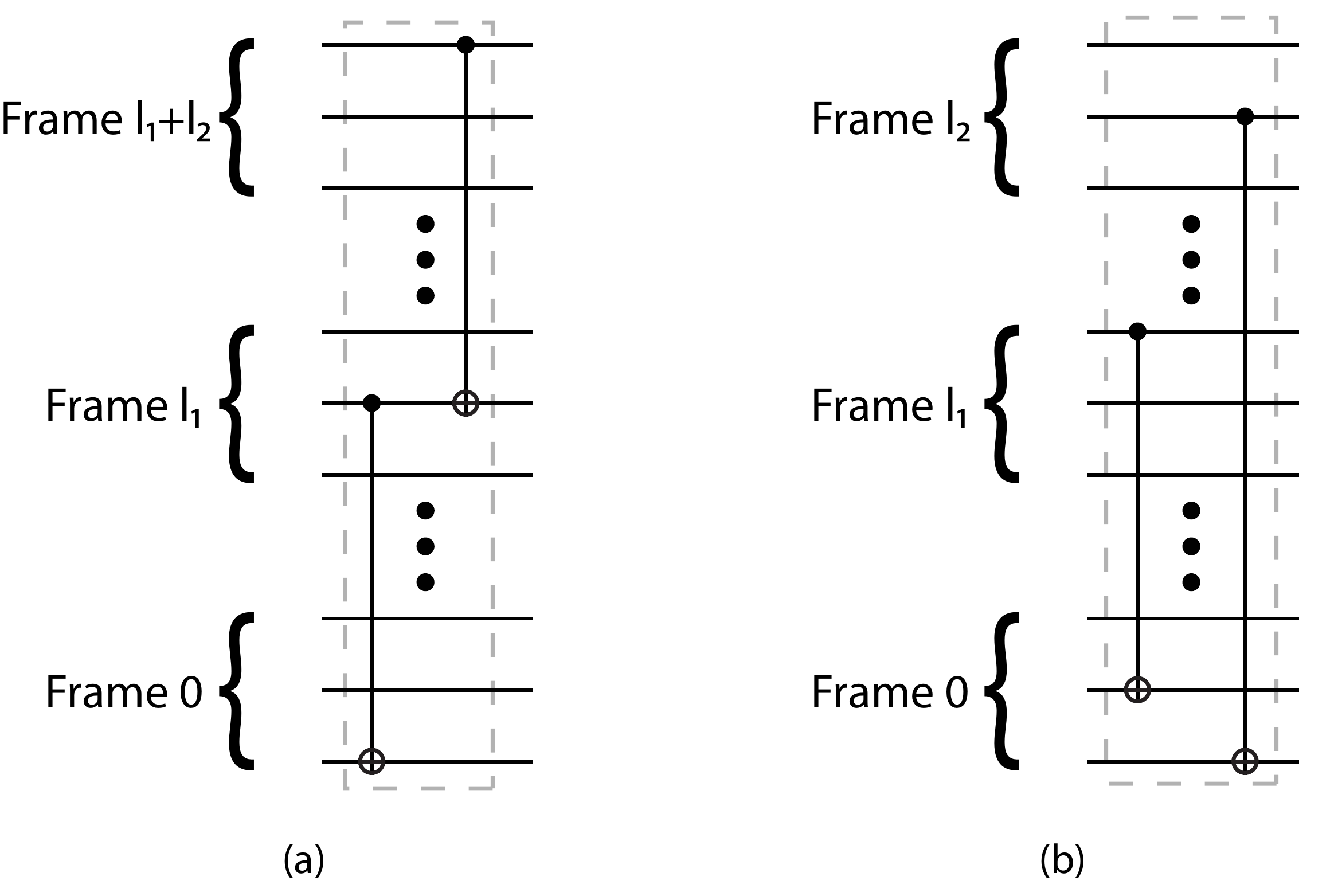}
\end{center}
\caption{Depiction of (a) a minimal-memory convolutional encoder for the gate
strings $\overline{\text{CNOT}}\left(  2,3\right)  \left(  D^{l_{1}}\right)
\ \overline{\text{CNOT}}\left(  1,2\right)  \left(  D^{l_{2}}\right)  $, and
(b) a minimal-memory convolutional encoder for the gate strings $\overline
{\text{CNOT}}\left(  1,2\right)  \left(  D^{l_{1}}\right)  \ \overline
{\text{CNOT}}\left(  2,3\right)  \left(  D^{l_{2}}\right)  $ where $l_1$ and $l_2$ are non-negative.}%
\label{fig:pos-non-comm}%
\end{figure}Figure~\ref{fig:pos-non-comm}(a) depicts a minimal-memory
convolutional encoder for the following gate strings%
\[
\overline{\text{CNOT}}\left(  2,3\right)  \left(  D^{l_{1}}\right)
\ \overline{\text{CNOT}}\left(  1,2\right)  \left(  D^{l_{2}}\right)  ,
\]
where $l_{1}$ and $l_{2}$ are both non-negative.

Consider two gate strings in a pearl-necklace encoder that have target-source
non-commutativity:%
\begin{equation}
\overline{\text{CNOT}}\left(  i,j\right)  \left(  D^{l_{1}}\right)
\ \overline{\text{CNOT}}\left(  j,k\right)  \left(  D^{l_{2}}\right)  ,
\label{eq:source-target-encodings3}%
\end{equation}
where $l_{1},l_{2}\geq0$. Thus, the target-source constraint in
(\ref{eq:target-source-constraint}) holds for any correct choice of a
convolutional encoder. Choosing $t_{1}=t_{2}$ leads to a minimal-memory
convolutional encoder because any other choice either does not implement the
correct transformation (it violates the target-source constraint) or it uses
more memory than this choice. A correct, minimal-memory choice for the
convolutional encoder is as follows:%
\[
\text{CNOT}(i,j)(l_{1},0)\ \text{CNOT}(j,k)(l_{2},0),
\]
and the number $L$ of frames of memory qubits it requires is as follows:%
\[
L=\max\{l_{1},l_{2}\}.
\]
Figure~\ref{fig:pos-non-comm}(b) depicts a minimal-memory convolutional
encoder for the following gate strings%
\[
\overline{\text{CNOT}}\left(  1,2\right)  \left(  D^{l_{1}}\right)
\ \overline{\text{CNOT}}\left(  2,3\right)  \left(  D^{l_{2}}\right)  ,
\]
where both $l_{1}$ and $l_{2}$ are non-negative.

Suppose that two gate strings have both kinds of non-commutativity:%
\[
\overline{\text{CNOT}}(i,j)(D^{l_{1}})\ \overline{\text{CNOT}}(j,i)(D^{l_{2}%
}),
\]
where $l_{1},l_{2}\geq0$. Thus, both constraints in
(\ref{eq:source-target-constraint}) and (\ref{eq:target-source-constraint})
hold for any correct choice of a convolutional encoder. The target-source
constraint in (\ref{eq:target-source-constraint}) holds if the source-target
constraint in (\ref{eq:source-target-constraint}) holds. So it is sufficient
to consider only the source-target constraint in such a scenario.

The above examples prepare us for constructing a minimal-memory convolutional
encoder that implements the same transformation as a pearl-necklace encoder
with unidirectional CNOT gates. Suppose that a pearl-necklace encoder features
the following succession of $N$ gate strings:%
\begin{equation}
\overline{\text{CNOT}}\left(  a_{1},b_{1}\right)  \left(  D^{l_{1}}\right)
\ \overline{\text{CNOT}}\left(  a_{2},b_{2}\right)  \left(  D^{l_{2}}\right)
\ \cdots\ \overline{\text{CNOT}}\left(  a_{N},b_{N}\right)  \left(  D^{l_{N}%
}\right)  ,\label{eq:p-encoding sequence}%
\end{equation}
where all $l_{i}\geq0$. The first gate in the convolutional encoder is
CNOT$\left(  a_{1},b_{1}\right)  \left(  \sigma_{1}=l_{1},\tau_{1}=0\right)
$. For the target indices of each gate $j$ where $2\leq j\leq N$, we should
find the minimal value of $\tau_{j}\ $that satisfies all the source-target and
target-source constraints that the gates preceding it impose. The
inequality in (\ref{eq:ta-cons-1}) applies to the target index of the $j^{\text{th}}$ gate in the
convolutional encoder by applying the source-target
constraint in (\ref{eq:source-target-constraint}):
\begin{align}
\sigma_{i} &  \leq\tau_{j} \,\,\,\,\,\,\forall i\in\mathcal{S}_{j}^{+}, \nonumber\\
\therefore\,\,\tau_{i}+l_{i} &  \leq\tau_{j}\,\,\,\,\,\,\forall i\in\mathcal{S}_{j}^{+},\nonumber\\
\therefore\,\,\max\{\tau_{i}+l_{i}\}_{i\in\mathcal{S}_{j}^{+}} &  \leq\tau
_{j},\label{eq:ta-cons-1}
\end{align}
Recall that the direction of frame numbering in the convolutional encoder is
opposite to the direction of numbering in the pearl-necklace encoder---so the
direction of inequalities are reversed with respect to
(\ref{eq:source-target-constraint}) and (\ref{eq:target-source-constraint}).
The inequality in (\ref{eq:ta-cons-1}) exploits all of the source-target
constraints corresponding to the gates preceding gate $j$ in order to place a
limit on the location of the $j^{\text{th}}$ gate in the convolutional
encoder. The inequality in (\ref{eq:ta-cons-2}) similarly exploits all
of the target-source constraints corresponding to the gates preceding gate
$j$:
\begin{align}
\tau_{i} &  \leq\sigma_{j} \,\,\,\,\,\,\forall i\in\mathcal{T}_{j}^{+}, \nonumber\\
\therefore\,\,\tau_{i}-l_{j} &  \leq\tau_{j}\,\,\,\,\,\,\forall i\in\mathcal{T}_{j}^{+},\nonumber\\
\therefore\,\,\max\{\tau_{i}-l_{j}\}_{i\in\mathcal{T}_{j}^{+}} &  \leq\tau
_{j}.\label{eq:ta-cons-2}%
\end{align}
The following constraint applies to the frame index $\tau_{j}$ of the
target qubit of the $j^{\text{th}}$ gate in the convolutional encoder, by
applying (\ref{eq:ta-cons-1}) and (\ref{eq:ta-cons-2}):%
\[
\tau_{j}\geq\max\{\{\tau_{i}+l_{i}\}_{i\in\mathcal{S}_{j}^{+}},\{\tau_{i}-l_{j}%
\}_{i\in\mathcal{T}_{j}^{+}}\}.
\]
Thus, the minimal value for $\tau_{j}$ that satisfies all the constraints is%
\begin{equation}
\tau_{j}=\max\{\{\tau_{i}+l_{i}\}_{i\in\mathcal{S}_{j}^{+}},\{\tau_{i}%
-l_{j}\}_{i\in\mathcal{T}_{i}^{+}}\}.\label{eq:positive-target-constraint1}%
\end{equation}
Of course, there is no constraint for the frame index $\tau_{j}$ if the gate
string $\overline{\text{CNOT}}(a_{j},b_{j})(D^{l_{j}})$ commutes with all
previous gate strings. Thus, in this case, we choose the frame index $\tau
_{j}$ as follows:%
\begin{equation}
\tau_{j}=0.\label{eq:positive-target-constraint2}%
\end{equation}
A good choice for the frame index $\tau_{j}$ is as follows:%
\begin{equation}
\tau_{j}=\max\{0,\{\tau_{i}+l_{i}\}_{i\in\mathcal{S}_{j}^{+}},\{\tau_{i}%
-l_{j}\}_{i\in\mathcal{T}_{j}^{-}}\},\label{eq:positive-target-constraint}%
\end{equation}
by considering (\ref{eq:positive-target-constraint1}) and
(\ref{eq:positive-target-constraint2}).

\subsubsection{Construction of the commutativity graph}

We introduce the notion of a \emph{commutativity} graph in order to find the
values in (\ref{eq:positive-target-constraint}) for the target qubit
frame indices. The graph is a weighted, directed acyclic graph constructed
from the non-commutativity relations of the gate strings
in~(\ref{eq:p-encoding sequence}). Let $G^{+}$ denote the commutativity graph for a succession of gate strings that have purely non-negative degrees (and
thus where the CNOT\ gates are unidirectional).
Algorithm~\ref{alg:nc-graph-positive} below presents pseudo code for
constructing the commutativity graph~$G^{+}$.

\begin{algorithm}
\caption {Algorithm for determining the commutativity graph $G^{+}$ for
purely non-negative case}
\label{alg:nc-graph-positive}
\begin{algorithmic}
\STATE$N \gets$ Number of gate strings in the pearl-necklace encoder
\STATE Draw a {\bf START} vertex
\FOR{$j := 1$ to $N$}
\STATE Draw a vertex labeled $j$ for the $j^{\text{th}}$ gate string $\overline{\text{CNOT}}(a_{j},b_{j})(D^{l_j})$
\STATE DrawEdge({\bf START}, $j, 0$)
\FOR{$i$ := $1$ to $j-1$}
\IF{$\text{Source-Target}\left( \overline{\text{CNOT}}(a_{i},b_{i})(D^{l_i}),\overline{ \text{CNOT}}(a_{j},b_{j})(D^{l_j})\right) = {\bf TRUE}$}
\STATE DrawEdge($i,j,l_i$)
\ELSIF{Target-Source$\left(\overline{\text{CNOT}}(a_{i},b_{i})(D^{l_i}),\overline{ \text{CNOT}}(a_{j},b_{j})(D^{l_j})\right) = {\bf TRUE}$}
\STATE DrawEdge($i,j,-l_j$)
\ENDIF
\ENDFOR
\ENDFOR
\STATE Draw an {\bf END} vertex
\FOR{$j$ := $1$ to $N$}
\STATE DrawEdge($j$, {\bf END}, $l_j$)
\ENDFOR
\end{algorithmic}
\end{algorithm}

The commutativity graph $G^{+}$ consists of $N$ vertices, labeled
$1,2,\cdots,N$, where the $j^{\text{th}}$ vertex corresponds to the
$j^{\text{th}}$ gate string $\overline{\text{CNOT}}(a_{j},b_{j})(D^{l_{j}})$.
It also has two dummy vertices, named \textquotedblleft
START\textquotedblright\ and \textquotedblleft END.\textquotedblright%
\ DrawEdge$\left(  i,j,w\right)  $ is a function that draws a directed edge
from vertex $i$ to vertex $j$ with an edge weight equal to $w$. A zero-weight
edge connects the START vertex to every vertex, and an $l_{j}$-weight edge
connects every vertex $j$ to the END vertex. Also, an $l_{i}$-weight edge
connects the $i^{\text{th}}$ vertex to the $j^{\text{th}}$\ vertex if%
\[
\text{Source-Target}\left(  \overline{\text{CNOT}}(a_{i},b_{i})(D^{l_{i}%
}),\overline{\text{CNOT}}(a_{j},b_{j})(D^{l_{j}})\right)  =\text{TRUE},
\]
and a $-l_{j}$-weight edge connects the $i^{\text{th}}$ vertex to the
$j^{\text{th}}$\ vertex if%
\[
\text{Target-Source}\left(  \overline{\text{CNOT}}(a_{i},b_{i})(D^{l_{i}%
}),\overline{\text{CNOT}}(a_{j},b_{j})(D^{l_{j}})\right)  =\text{TRUE}.
\]
The commutativity graph $G^{+}$ is an acyclic graph because a directed edge
connects each vertex only to vertices for which its corresponding gate comes later in the pearl-necklace encoder.

The construction of $G^{+}$ requires time quadratic in the number of gate
strings in the pearl-necklace encoder. In
Algorithm~\ref{alg:nc-graph-positive}, the \textbf{if }instruction in the
inner \textbf{for} loop requires constant time $O(1)$. The sum of iterations
of the \textbf{if} instruction in the $j^{\text{th}}$ iteration of the outer
\textbf{for} loop is equal to $j-1$. Thus the running time $T(N)$\ of
Algorithm~\ref{alg:nc-graph-positive} is
\[
T(N)=\sum_{i=1}^{N}{\sum_{k=1}^{j-1}O(1)}=O(N^{2}).
\]

\subsubsection{The longest path gives the minimal memory requirements}

Theorem~\ref{thm:longest-path-is-memory-+} below states that the weight of the
longest path from the START vertex to the END vertex is equal to the minimal memory
required for a convolutional encoder implementation.

\begin{theorem}
\label{thm:longest-path-is-memory-+}The weight $w$\ of the longest path from
the START vertex to END vertex in the commutativity graph $G^{+}$ is equal
to minimal memory $L$ that the convolutional encoder requires.
\end{theorem}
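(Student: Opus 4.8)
The plan is to establish the theorem in two directions, showing that the longest-path weight $w$ is both an upper bound and a lower bound on the minimal memory $L$. The central observation is that the recursive formula in~(\ref{eq:positive-target-constraint}) for the target index $\tau_j$ is, by construction, a longest-path recurrence on the commutativity graph $G^{+}$. First I would make this correspondence precise: each edge of $G^{+}$ encodes exactly one of the constraints derived in Section~\ref{sec:constraints}. A source-target edge from $i$ to $j$ has weight $l_i$ and encodes the constraint $\tau_j \geq \tau_i + l_i$ coming from~(\ref{eq:ta-cons-1}), while a target-source edge from $i$ to $j$ has weight $-l_j$ and encodes $\tau_j \geq \tau_i - l_j$ coming from~(\ref{eq:ta-cons-2}). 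The START vertex with its zero-weight edges encodes the baseline constraint $\tau_j \geq 0$ in~(\ref{eq:positive-target-constraint2}). Thus for every vertex $j$, the quantity $\tau_j$ defined by~(\ref{eq:positive-target-constraint}) equals the weight of the longest path from START to vertex $j$, which I would verify by induction on $j$ (the graph is acyclic and all predecessors $i$ of $j$ satisfy $i<j$, so the induction is well-founded).

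Next I would relate the per-vertex longest-path weight to the total memory $L$. The memory required by the convolutional encoder is the number of frames spanned by its gates; since each gate string $j$ contributes a gate occupying frames between $\tau_j$ and $\sigma_j = \tau_j + l_j$, the total memory is $L = \max_j \{\tau_j + l_j\}$ over all gate strings. This is precisely where the $l_j$-weight edges from each vertex $j$ to the END vertex come into play: the longest path from START to END passes through some vertex $j$ and then takes its terminal edge of weight $l_j$, giving total weight $\tau_j + l_j$. Taking the maximum over all such paths yields $w = \max_j\{\tau_j + l_j\} = L$. I would argue the upper bound direction by exhibiting the explicit convolutional encoder built from the $\tau_j$ values of~(\ref{eq:positive-target-constraint}) and checking it is correct (all constraints satisfied) and uses exactly $w$ frames.

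For the lower bound, I would argue that no correct convolutional encoder can use fewer than $w$ frames. The key point is that every correct encoder must satisfy all the source-target and target-source constraints~(\ref{eq:source-target-constraint}) and~(\ref{eq:target-source-constraint}) simultaneously, as established in Section~\ref{sec:constraints}; these are necessary, not merely sufficient. Any assignment of frame indices $\{\tau_j\}$ obeying all these constraints must dominate the longest-path solution at every vertex, because the longest path collects the tightest chain of these inequalities. Following the chain of constraints along a longest path from START to END forces the total memory to be at least $w$, so $L \geq w$. Combining both directions gives $L = w$.

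The main obstacle I anticipate is making the lower-bound argument fully rigorous, in particular justifying that the constraints captured by the edges of $G^{+}$ are \emph{exhaustive}---that there are no additional hidden constraints that a correct encoder must obey, and that the per-pair constraints of Section~\ref{sec:constraints} genuinely compose into the global longest-path bound without interaction effects. One must confirm that satisfying the pairwise constraints for all preceding non-commuting gate strings is both necessary and sufficient for global correctness, so that feasibility of the frame assignment reduces exactly to the system of difference constraints that the graph represents. The forward (sufficiency) direction is handled by the explicit encoder construction, but verifying that the longest path is achievable as a genuine chain of tight constraints---rather than an over-count arising from incompatible simultaneous demands---is the delicate step. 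I expect this to follow from the acyclicity of $G^{+}$ together with the fact that the difference-constraint system is consistent, but it deserves careful treatment.
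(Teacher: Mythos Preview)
Your proposal is correct and follows essentially the same approach as the paper: both arguments prove by induction on $j$ that the longest-path weight $w_j$ from START to vertex $j$ equals $\tau_j$ as defined in~(\ref{eq:positive-target-constraint}), and then observe that $w=\max_j\{\tau_j+l_j\}=\max_j\{\sigma_j\}$ is the memory of the resulting encoder. Your explicit separation into an upper-bound direction (the constructed encoder achieves $w$) and a lower-bound direction (every correct encoder must obey the pairwise constraints of Section~\ref{sec:constraints}, hence needs at least $w$ frames) is more careful than the paper's treatment, which simply asserts that $\max_j\{\sigma_j\}$ is the minimal memory by appeal to the preceding derivation; the concern you flag about exhaustiveness of the constraints is legitimate but is exactly what Section~\ref{sec:constraints} is meant to establish.
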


\begin{proof}
We first prove by induction that the weight $w_{j}$ of the longest path from
the START vertex to vertex $j$ in the commutativity graph $G^{+}$ is%
\begin{equation}
w_{j}=\mathbb{\tau}_{j}. \label{eq:Gwlp}%
\end{equation}
A zero-weight edge connects the START vertex to the first vertex, so that
$w_{1}=\tau_{1}=0$. Thus the base step holds for the target index of the first
CNOT gate in a minimal-memory convolutional encoder. Now suppose the property
holds for the target indices of the first $k$ CNOT gates in the convolutional
encoder:%
\begin{equation}
w_{j}=\mathbb{\tau}_{j}\,\,\,\,\,\,\forall j : 1\leq j\leq k. \label{eq:inductive-hypothesis}
\end{equation}
Suppose we add a new gate string $\overline{\text{CNOT}}(a_{k+1},b_{k+1}%
)(D^{l_{k+1}})$ to the pearl-necklace encoder, and
Algorithm~\ref{alg:nc-graph-positive} then adds a new vertex $k+1$ to the
graph $G^{+}$ and the following edges to $G^{+}$:

\begin{enumerate}
\item A zero-weight edge from the START vertex to vertex $k+1$.

\item An $l_{k+1}$-weight edge from vertex $k+1$ to the END vertex.

\item An $l_{i}$-weight edge from each vertex $\{i\}_{i\in\mathcal{S}_{k+1}^{+}}$
to vertex $k+1$.

\item A $-l_{k+1}$-weight edge from each vertex $\{i\}_{i\in\mathcal{T}_{k+1}^{+}%
}$ to vertex $k+1$.
\end{enumerate}
So it is clear that the following relations hold because $w_{k+1}$ is the weight
of the longest path to vertex $k+1$ and from applying (\ref{eq:inductive-hypothesis}):%
\begin{align}
w_{k+1}  &  =\max\{0,\{w_{i}+l_{i}\}_{i\in\mathcal{S}_{k+1}^{+}},\{w_{i}%
-l_{k+1}\}_{i\in\mathcal{T}_{k+1}^{+}}\},\nonumber\\
&  =\max\{0,\{\mathbb{\tau}_{i}+l_{i}\}_{i\in\mathcal{S}_{k+1}^{+}},\{\mathbb{\tau
}_{i}-l_{k+1}\}_{i\in\mathcal{T}_{k+1}^{+}}\}. \label{eq:Gwlp2}%
\end{align}
The inductive proof then follows by applying
(\ref{eq:positive-target-constraint}) and (\ref{eq:Gwlp2}):%
\[
w_{k+1}=\tau_{k+1}.
\]
The proof of the theorem follows by considering the following equalities:%
\begin{align*}
w  &  =\max_{i\in\{1,2,\cdots,N\}}\{w_{i}+l_{i}\}\\
&  =\max_{i\in\{1,2,\cdots,N\}}\{\mathbb{\tau}_{i}+l_{i}\}\\
&  =\max_{i\in\{1,2,\cdots,N\}}\{{{\sigma}_{i}}\}.
\end{align*}
The first equality holds because the longest path in
the graph is the maximum of the weight of the path
to the $i^{\text{th}}$ vertex summed with the weight
of the edge from the $i^{\text{th}}$ vertex to the END vertex. The second equality follows by applying
(\ref{eq:Gwlp}). The final equality follows because ${{\sigma}_{i}}=\tau
_{i}+l_{i}$. The quantity $\max\{{{\sigma}_{i}}\}_{i\in\{1,2,\cdots
,N\}}$ is equal to minimal required memory for a minimal-memory convolutional
encoder because the largest location of a source qubit determines
the number of frames upon which a convolutional encoder with unidirectional CNOT gates acts.
(Recall that we number the frames starting from zero). Thus, the theorem holds.
\end{proof}

The final task is to determine the longest path in $G^{+}$. Finding the
longest path in a general graph is an NP-complete problem, but dynamic
programming finds it on a
weighted, directed acyclic graph in time linear in the number of vertices and edges,
or equivalently, quadratic in the number of gate strings in the pearl-necklace encoder~\cite{cormen}.

\subsubsection{Example of a pearl-necklace encoder with unidirectional CNOT gates}
 We conclude this development with an example.
\begin{example}
\label{ex:positive}Consider the following gate strings in a pearl-necklace
encoder:%
\[
\overline{\text{CNOT}}(2,3)(D)\ \overline{\text{CNOT}}(1,2)(D)\ \overline
{\text{CNOT}}(2,3)(D^{2})\ \overline{\text{CNOT}}(1,2)(1)\ \overline
{\text{CNOT}}(2,1)(D).
\]
All gate strings in the above pearl-necklace encoder have non-negative degree and are thus
unidirectional.
Figure \ref{fig:example-1}(a) draws $G^{+}$ for this pearl-necklace encoder,
after running Algorithm~\ref{alg:nc-graph-positive}. The graph displays all of the
source-target and target-source non-commutativities between gate strings in
the pearl-necklace encoder. The longest path through the graph is
\[
\text{START}\rightarrow3\rightarrow4\rightarrow5\rightarrow\text{END},
\]
with weight equal to three. So the minimal memory for the convolutional encoder
is equal to three frames of
memory qubits. Also from inspecting the graph $G^{+}$,
we can determine the locations for all the target qubit frame indices: $\tau_{1}=0$, $\tau_{2}=1$, $\tau_{3}%
=0$, $\tau_{4}=2$, and $\tau_{5}=2.$ Figure~\ref{fig:example-1}(b) depicts a
minimal-memory convolutional encoder that implements the same transformation as the pearl-necklace encoder.
\end{example}

{%
\begin{figure}
[ptb]
\begin{center}
\includegraphics[
natheight=6.833800in,
natwidth=11.726900in,
width=5.047in
]%
{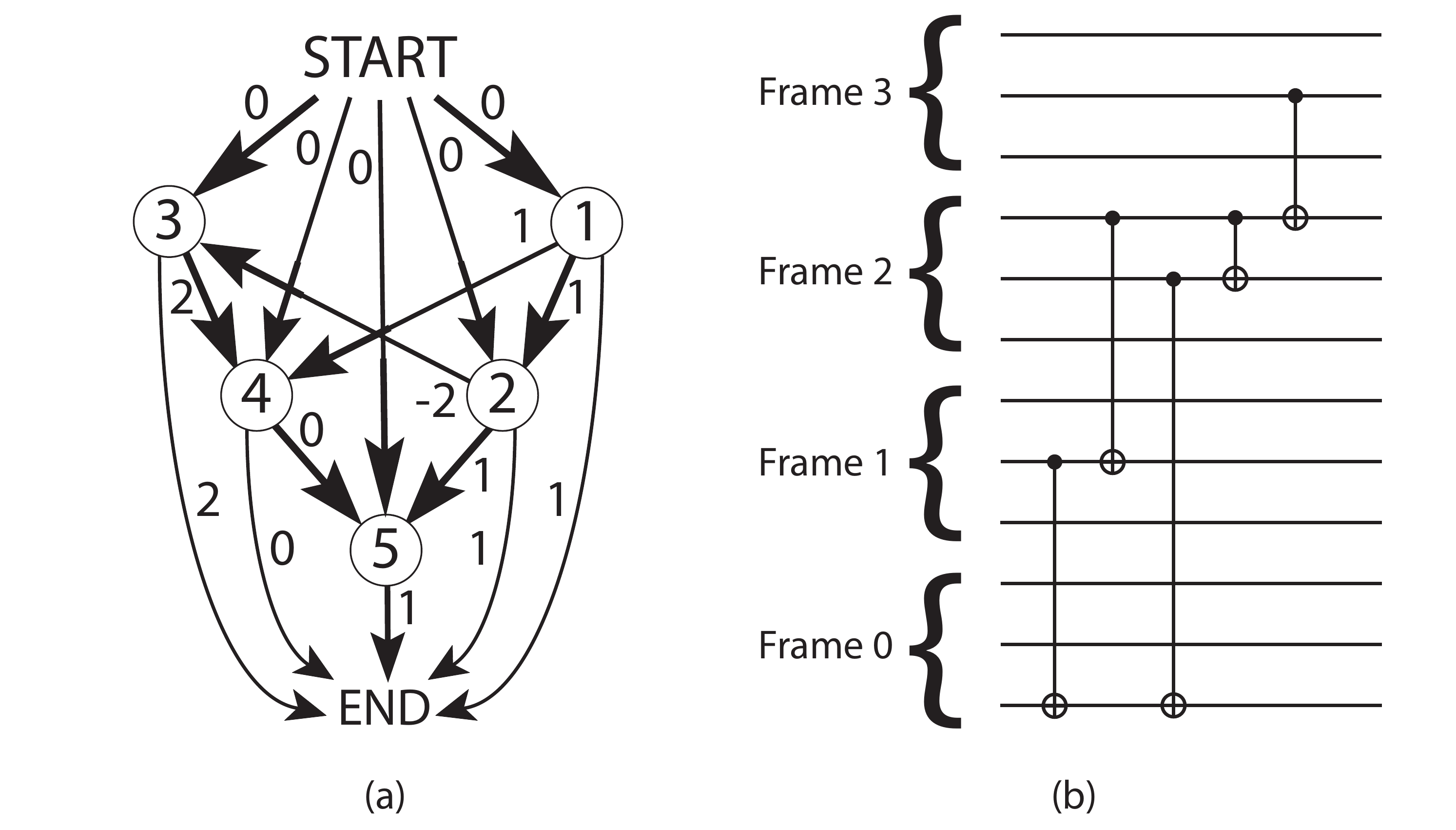}%
\caption{(a) The commutativity graph $G^{+}$ and (b) a minimal-memory
convolutional encoder for Example~\ref{ex:positive}.}%
\label{fig:example-1}%
\end{center}
\end{figure}
}

\subsection{Memory requirements for a unidirectional pearl-necklace encoder in
the opposite direction}

\label{sec:negative-degree}In this section, we find a minimal-memory
convolutional encoder that implements the same transformation as
a pearl-necklace encoder with purely non-positive degree
CNOT\ gates. The ideas in this section are similar to those in the previous one.

First consider a pearl-necklace encoder that is a succession of several CNOT
gate strings:%
\[
\overline{\text{CNOT}}\left(  a_{1},b_{1}\right)  \left(  D^{l_{1}}\right)
\ \overline{\text{CNOT}}\left(  a_{2},b_{2}\right)  \left(  D^{l_{2}}\right)
\ \cdots\ \overline{\text{CNOT}}\left(  a_{m},b_{m}\right)  \left(  D^{l_{m}%
}\right)  ,
\]
where all $l_{i}\leq0$ and all the gate strings commute with each other. All
members of $M$ in (\ref{eq:M}) are correct choices for the convolutional
encoder, as explained in the beginning of Section~\ref{sec:constraints}.
But this time, choosing the same value for each source qubit frame index $s_{i}$ results in
the minimal required memory $L$ where%
\[
L=\max\{\left\vert l_{1}\right\vert ,\left\vert l_{2}\right\vert
,...,\left\vert l_{m}\right\vert \}.
\]
A correct choice for a minimal-memory convolutional encoder is%
\[
\text{CNOT}(a_{1},b_{1})(0,\left\vert l_{1}\right\vert )\ \text{CNOT}%
(a_{2},b_{2})(0,\left\vert l_{2}\right\vert )\ \cdots\ \text{CNOT}(a_{m}%
,b_{m})(0,\left\vert l_{m}\right\vert ).
\]

Now consider two gate strings that have source-target non-commutativity:%
\begin{equation}
\overline{\text{CNOT}}\left(  i,j\right)  \left(  D^{l_{1}}\right)
\ \overline{\text{CNOT}}\left(  k,i\right)  \left(  D^{l_{2}}\right)  ,
\end{equation}
where $l_{1},l_{2}\leq0$. Thus, the source-target constraint in
(\ref{eq:source-target-constraint}) holds for any correct choice of a
convolutional encoder. Choosing $s_{1}=s_{2}$ leads to the minimal memory
required for the convolutional encoder because any other choice either does
not implement the correct transformation (it violates the source-target
constraint) or it uses more memory than this choice. A correct choice for a
minimal-memory convolutional encoder is%
\[
\text{CNOT}\left(  i,j\right)  \left(  0,\left\vert l_{1}\right\vert \right)
\ \text{CNOT}\left(  k,i\right)  \left(  0,\left\vert l_{2}\right\vert
\right)  .
\]
Such a convolutional encoder requires $L$\ frames of memory qubits where%
\[
L=\max\{\left\vert l_{1}\right\vert ,\left\vert l_{2}\right\vert \}.
\]
Figure \ref{fig:neg-non-comm}(a) illustrates a minimal-memory convolutional
encoder for the gate strings%
\[
\overline{\text{CNOT}}\left(  3,2\right)  \left(  D^{l_{1}}\right)
\ \overline{\text{CNOT}}\left(  1,3\right)  \left(  D^{l_{2}}\right)  ,
\]
where $l_{1},l_{2}\leq0$.

Now consider two gate strings that have target-source non-commutativity:%
\begin{equation}
\overline{\text{CNOT}}\left(  i,j\right)  \left(  D^{l_{1}}\right)
\overline{\text{CNOT}}\left(  j,k\right)  \left(  D^{l_{2}}\right)  ,
\end{equation}
that where $l_{1},l_{2}\leq0$. The target-source constraint in
(\ref{eq:target-source-constraint}) holds for any correct choice of a
convolutional encoder. Choosing $t_{1}=s_{2}$ leads to a minimal-memory
convolutional encoder because any other choice either does not implement the
correct transformation (it violates the target-source constraint) or it
requires more memory than this choice. A correct choice for a minimal-memory
convolutional encoder is{%
\begin{figure}
[ptb]
\begin{center}
\includegraphics[
natheight=5.853100in,
natwidth=8.406000in,
width=3.5405in
]%
{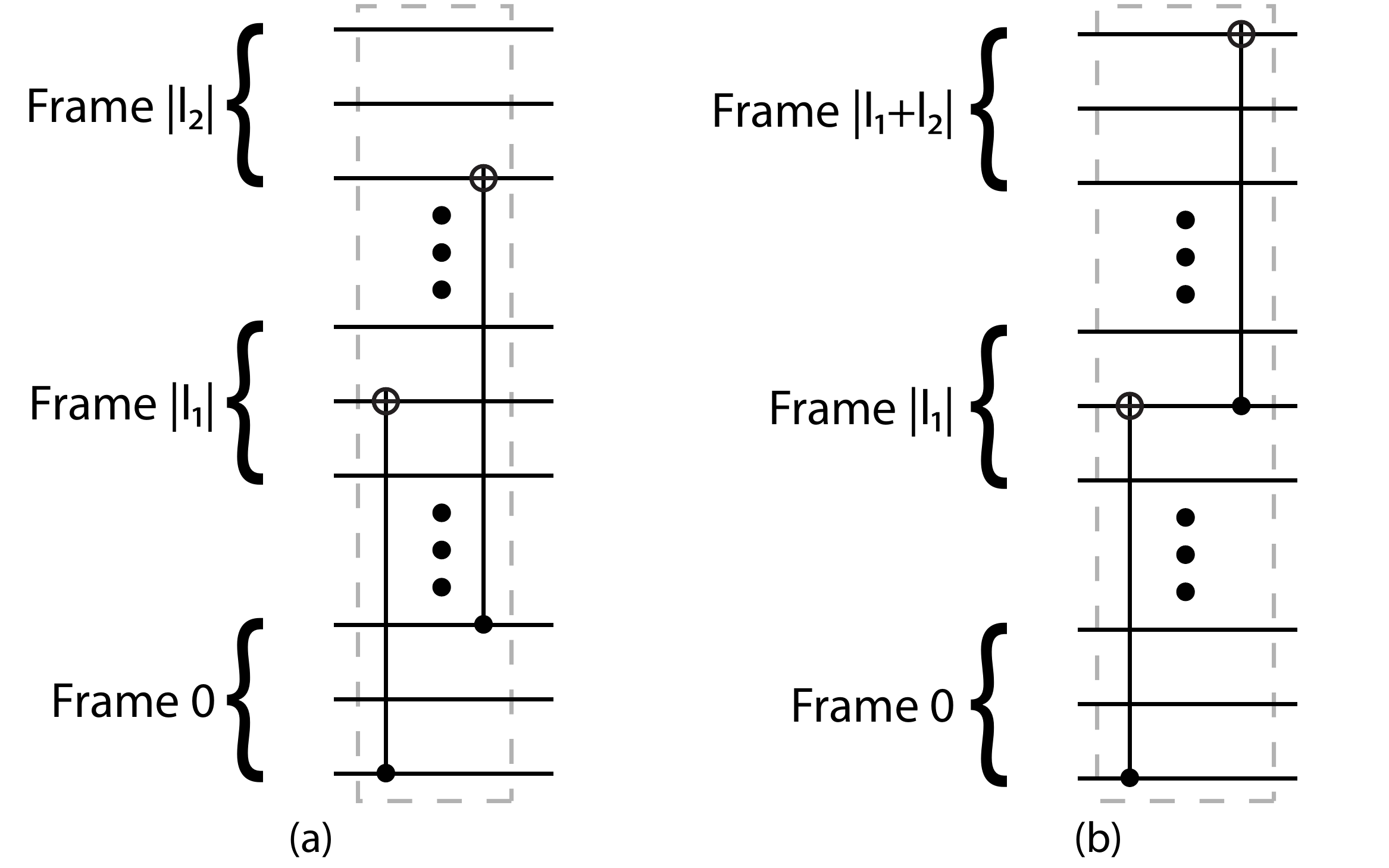}%
\caption{(a) A minimal-memory convolutional encoder for the gate strings
$\overline{\text{CNOT}}\left(  3,2\right)  \left(  D^{l_{1}}\right)
\ \overline{\text{CNOT}}\left(  1,3\right)  \left(  D^{l_{2}}\right)  $, and
(b) a minimal-memory convolutional encoder for the gate strings $\overline
{\text{CNOT}}\left(  3,2\right)  \left(  D^{l_{1}}\right)  \ \overline
{\text{CNOT}}\left(  2,1\right)  \left(  D^{l_{2}}\right)  $ where $l_1$ and $l_2$ are non-positive.}%
\label{fig:neg-non-comm}%
\end{center}
\end{figure}
}%
\[
\text{CNOT}\left(  i,j\right)  \left(  0,\left\vert l_{1}\right\vert \right)
\text{CNOT}\left(  k,i\right)  \left(  \left\vert l_{1}\right\vert ,\left\vert
l_{1}+l_{2}\right\vert \right)  ,
\]
with the number $L$\ of frames of memory qubits as follows:%
\[
L=\left\vert l_{1}+l_{2}\right\vert .
\]
Figure \ref{fig:neg-non-comm}(b) shows a minimal-memory convolutional encoder
for the encoding sequence%
\[
\overline{\text{CNOT}}\left(  3,2\right)  \left(  D^{l_{1}}\right)
\ \overline{\text{CNOT}}\left(  2,1\right)  \left(  D^{l_{2}}\right)  ,
\]
where $l_{1}$ and $l_{2}$ are non-positive.

Suppose we have two gate strings that feature both types of
non-commutativity:
\[
\overline{\text{CNOT}}(i,j)(D^{l_{1}})\ \overline{\text{CNOT}}(j,i)(D^{l_{2}%
}).
\]
Thus, both constraints in (\ref{eq:source-target-constraint}) and
(\ref{eq:target-source-constraint}) hold for any correct choice of a
convolutional encoder. The source-target constraint in
(\ref{eq:source-target-constraint}) holds if the target-source constraint in
(\ref{eq:target-source-constraint}) holds when both degrees are non-positive. So
it is sufficient to consider only the target-source constraint in this scenario.

The above examples prepare us for constructing a minimal-memory convolutional
encoder that implements the same transformation as a pearl-necklace encoder
with unidirectional CNOT\ gates (the gates are in the opposite direction of
those in Section~\ref{sec:unidirectional-first}). Suppose that a pearl-necklace encoder features the following succession of $N$ gate strings:%
\begin{equation}
\overline{\text{CNOT}}\left(  a_{1},b_{1}\right)  \left(  D^{l_{1}}\right)
\ \overline{\text{CNOT}}\left(  a_{2},b_{2}\right)  \left(  D^{l_{2}}\right)
\ \cdots\ \overline{\text{CNOT}}\left(  a_{N},b_{N}\right)  \left(  D^{l_{N}%
}\right)  , \label{eq:neg-enc-seq}%
\end{equation}
where $N$ is the number of gate strings and all $l_{i}\leq0$. The first gate
in the convolutional encoder is CNOT$\left(  \sigma_{1}=0,\tau_{1}%
=l_{1}\right)  $. For the source indices of gate $j$ where $2\leq j\leq N$, we
should find the minimal value for $\sigma_{j}$ that satisfies all the
source-target and target-source constraints that the previous gates impose.
The following inequalities apply to the source qubit frame index $\sigma_{j}$
of the $j^{\text{th}}$ gate in the convolutional encoder:
\begin{align}
\sigma_{i}  &  \leq\tau_{j}\,\,\,\,\,\,\forall i\in\mathcal{S}_{j}^{-},\nonumber\\
\therefore\,\,\,\sigma_{i}  &  \leq\sigma_{j}+\left\vert l_{j}\right\vert
\,\,\,\,\,\,\forall i\in\mathcal{S}_{j}^{-},\nonumber\\
\therefore\,\,\,\max\{\sigma_{i}-\left\vert l_{j}\right\vert \}_{i\in\mathcal{S}%
_{j}^{-}}  &  \leq\sigma_{j},\label{eq:so-cons-1}
\end{align}
The inequality in (\ref{eq:so-cons-1}) exploits all of the source-target
constraints corresponding to the gates preceding gate $j$ in order to place a
limit on the location of the $j^{\text{th}}$ gate in the convolutional
encoder. The inequality below similarly exploits all
of the target-source constraints corresponding to the gates preceding gate
$j$:
\begin{align}
\tau_{i}  &  \leq\sigma_{j} \,\,\,\,\,\,\forall i\in\mathcal{T}_{j}^{-},\nonumber\\
\therefore\,\,\,\sigma_{i}+\left\vert l_{i}\right\vert  &  \leq\sigma_{j}
\,\,\,\,\,\,\forall i\in\mathcal{T}_{j}^{-},\nonumber\\
\therefore\,\,\,\max\{\sigma_{i}+\left\vert l_{i}\right\vert \}_{i\in\mathcal{T}_{j}^{-}}
 &  \leq\sigma_{j}. \label{eq:so-cons-2}
\end{align}
The following constraint applies to the frame index $\sigma_{j}$ of the
source qubit of the $j^{\text{th}}$ gate in the convolutional encoder, by
applying (\ref{eq:so-cons-1}) and (\ref{eq:so-cons-2}):%
\[
\sigma_{j}\geq\max\{\{\sigma_{i}-\left\vert l_{j}\right\vert \}_{i\in\mathcal{S}_{j}^{-}%
},\{\sigma_{i}+\left\vert l_{i}\right\vert \}_{i\in\mathcal{T}_{i}^{-}}%
\}.
\]
Thus, the minimal value for $\sigma_{j}$ that satisfies all the constraints is%
\begin{equation}
\sigma_{j}=\max\{\{\sigma_{i}-\left\vert l_{j}\right\vert \}_{i\in
\mathcal{S}_{j}^{-}},\{\sigma_{i}+\left\vert l_{i}\right\vert \}_{i\in
\mathcal{T}_{j}^{-}}\} \label{eq:mas}%
\end{equation}
There is no constraint for the source index $\sigma_{j}$\ if the gate string
$\overline{\text{CNOT}}(a_{j},b_{j})(D^{l_{j}})$ commutes with all previous
gate strings. Thus, in this case, we can choose $\sigma_{j}$\ as follows:%
\begin{equation}
\sigma_{j}=0. \label{eq:h}%
\end{equation}
So, based on (\ref{eq:mas}) and (\ref{eq:h}), a good choice for $\sigma_{j}$
is as follows:%
\begin{equation}
\sigma_{j}=\max\{0,\{\sigma_{i}-\left\vert l_{j}\right\vert \}_{i\in
\mathcal{S}_{j}^{-}},\{\sigma_{i}+\left\vert l_{i}\right\vert \}_{i\in
\mathcal{T}_{j}^{-}}\}\}. \label{eq:negative-source-constraint}%
\end{equation}

\subsubsection{Construction of the commutativity graph for the non-positive
degree case}

We construct a \emph{commutativity} graph $G^{-}$ in order to find the
values in (\ref{eq:negative-source-constraint}). It is again a weighted,
directed acyclic graph constructed from the non-commutativity relations in the
pearl-necklace encoder in (\ref{eq:neg-enc-seq}).
Algorithm~\ref{alg:nc-graph-negative} presents pseudo code for the
construction of the commutativity graph in the non-positive degree case.

\begin{algorithm}
\caption{Algorithm for determining the commutativity graph $G^{-}$ for purely non-positive case}
\label{alg:nc-graph-negative}
\begin{algorithmic}
\STATE$N \gets$ Number of gate strings in the pearl-necklace encoder.
\STATE Draw a {\bf START} vertex.
\FOR{$j := 1$ to $N$}
\STATE Draw a vertex labeled $j$ for the $j^{\text{th}}$ gate string $\overline{\text{CNOT}}(a_{j},b_{j})(D^{l_j})$
\STATE DrawEdge({\bf START}, $j$, 0)
\FOR{$i$ := $1$ to $j-1$}
\IF{Target-Source$\left(\overline{\text{CNOT}}(a_{i},b_{i})(D^{l_i}),\overline{ \text{CNOT}}(a_{j},b_{j})(D^{l_j})\right) = {\bf TRUE}$}
\STATE DrawEdge($i,j,|l_i|$ )
\ELSIF{Source-Target$\left(\overline{\text{CNOT}}(a_{i},b_{i})(D^{l_i}),\overline{ \text{CNOT}}(a_{j},b_{j})(D^{l_j})\right)={\bf TRUE}$}
\STATE DrawEdge($i,j,-|l_j|$)
\ENDIF
\ENDFOR
\ENDFOR
\STATE Draw an {\bf END} vertex.
\FOR{$j$ := $1$ to $N$}
\STATE DrawEdge($j$,{\bf END},$|l_j|$)
\ENDFOR
\end{algorithmic}
\end{algorithm}

The graph $G^{-}$ consists of $N$ vertices, labeled $1,2,\cdots,N$, where
vertex $j$ corresponds to the $j^{\text{th}}$ gate string $\overline
{\text{CNOT}}(a_{j},b_{j})(D^{l_{j}})$. A zero-weight edge connects the START vertex to all
vertices, and an $\left\vert l_{j}\right\vert $-weight edge connects every
vertex $j$ to the END vertex. Also, an $\left\vert l_{i}\right\vert $-weight
edge connects vertex $i$ to vertex $j$ if%
\[
\text{Target-Source}\left(  \overline{\text{CNOT}}(a_{i},b_{i})(D^{l_{i}%
}),\overline{\text{CNOT}}(a_{j},b_{j})(D^{l_{j}})\right)  =\text{TRUE},
\]
and a $-\left\vert l_{j}\right\vert $-weight edge connects vertex $i$ to
vertex $j$ if%
\[
\text{Source-Target}\left(  \overline{\text{CNOT}}(a_{i},b_{i})(D^{l_{i}%
}),\overline{\text{CNOT}}(a_{j},b_{j})(D^{l_{j}})\right)  =\text{TRUE}.
\]
The graph $G^{-}$ is an acyclic graph and its construction complexity is $O(N^{2})$ (similar to the complexity for constructing $G^{+}$). Dynamic programming can find
the longest path in $G^{-}$ in time linear in the number of vertices and edges, or equivalently, quadratic
in the number of gate strings in the pearl-necklace encoder.

\subsubsection{The longest path gives the minimal memory requirements}

We now prove that the weight of the longest path from the START vertex to END vertex in $G^{-}$ is equal to the memory
in a minimal-memory realization of the pearl-necklace encoder in
(\ref{eq:neg-enc-seq}).

\begin{theorem}
The weight of the longest path from the START vertex to END vertex in the graph $G^{-}$
is equal to the minimal memory requirements of the convolutional encoder.
\end{theorem}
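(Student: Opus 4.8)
The plan is to mirror the inductive argument used in the proof of Theorem~\ref{thm:longest-path-is-memory-+}, but with two systematic substitutions dictated by the non-positive degrees: the quantity tracked along the graph becomes the source frame index $\sigma_j$ rather than the target index $\tau_j$, and every degree is replaced by its absolute value $|l_i|$. Concretely, I would first prove by induction on $j$ that the weight $w_j$ of the longest path from the START vertex to vertex $j$ in $G^{-}$ equals the source frame index of the $j^{\text{th}}$ gate in a minimal-memory convolutional encoder, i.e.\ $w_j = \sigma_j$.

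For the base case, the zero-weight edge from START to vertex $1$ forces $w_1 = 0$, which matches $\sigma_1 = 0$ for the first gate. For the inductive step, I would assume $w_i = \sigma_i$ for all $i \leq k$ and examine the edges that Algorithm~\ref{alg:nc-graph-negative} adds when vertex $k+1$ is introduced: a zero-weight edge from START, an $|l_i|$-weight edge from each vertex $i\in\mathcal{T}_{k+1}^{-}$, and a $-|l_{k+1}|$-weight edge from each vertex $i\in\mathcal{S}_{k+1}^{-}$. Since $w_{k+1}$ is by definition the maximum weight over all incoming paths, applying the inductive hypothesis yields
\begin{align*}
w_{k+1} &= \max\{0,\{w_i - |l_{k+1}|\}_{i\in\mathcal{S}_{k+1}^{-}},\{w_i + |l_i|\}_{i\in\mathcal{T}_{k+1}^{-}}\}\\
&= \max\{0,\{\sigma_i - |l_{k+1}|\}_{i\in\mathcal{S}_{k+1}^{-}},\{\sigma_i + |l_i|\}_{i\in\mathcal{T}_{k+1}^{-}}\}.
\end{align*}
Comparing this directly with the recursion in (\ref{eq:negative-source-constraint}) gives $w_{k+1} = \sigma_{k+1}$, which closes the induction.

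Finally, I would compute the weight $w$ of the longest path from START to END. Because every vertex $i$ connects to END by an edge of weight $|l_i|$, the longest such path has weight
\[
w = \max_{i\in\{1,2,\cdots,N\}}\{w_i + |l_i|\} = \max_{i\in\{1,2,\cdots,N\}}\{\sigma_i + |l_i|\} = \max_{i\in\{1,2,\cdots,N\}}\{\tau_i\},
\]
where the second equality uses $w_i = \sigma_i$ and the third uses $\tau_i = \sigma_i + |l_i|$. Since the CNOT gates all have non-positive degree, it is the largest \emph{target} frame index $\max_i\{\tau_i\}$ that determines the number of frames on which the convolutional encoder acts, and this quantity therefore equals the minimal required memory.

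I expect the main obstacle to be bookkeeping rather than conceptual: one must verify that the edge weights assigned by Algorithm~\ref{alg:nc-graph-negative} (namely $+|l_i|$ on target-source edges and $-|l_{k+1}|$ on source-target edges) are exactly the increments appearing in the constraint (\ref{eq:negative-source-constraint}), so that the longest-path recursion in $G^{-}$ coincides with the minimal-$\sigma_j$ recursion. The roles of source and target, and of the two non-commutativity types, are interchanged relative to the positive-degree case, so the chief risk is a sign or index slip; once the edge weights are matched to the constraint, the induction and the concluding maximization go through in complete parallel to the proof of Theorem~\ref{thm:longest-path-is-memory-+}.
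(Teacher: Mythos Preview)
Your proposal is correct and follows exactly the approach the paper takes: the paper's proof simply says ``by similar reasoning as in Theorem~\ref{thm:longest-path-is-memory-+}'' one obtains $w_j=\sigma_j$, and then that $w=\max_i\{\tau_i\}$ gives the minimal memory. You have faithfully expanded those two sentences into the full inductive argument, with the correct role-swap ($\sigma_j$ for $\tau_j$, $|l_i|$ for $l_i$, and the target-source and source-target edge weights interchanged) so that the longest-path recursion matches~(\ref{eq:negative-source-constraint}).
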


\begin{proof}
By similar reasoning as in Theorem~\ref{thm:longest-path-is-memory-+}, the weight
of the longest path from the START vertex to vertex $j$ in the
commutativity graph $G^{-}$ is equal to
\begin{equation}
w_{j}=\sigma_{j} \label{eq:Gwlp3}
\end{equation}
Similar to the proof of Theorem~\ref{thm:longest-path-is-memory-+},
we can prove that the longest path
from the START vertex to END vertex in $G^{-}$ is equal to $\max
\{\mathbb{\tau}_{i}\}_{1\leq i\leq N}$. Thus, it is equal to the minimal required number
of frames of memory qubits.
\end{proof}

\subsubsection{Example of a pearl-necklace encoder with unidirectional CNOT gates in the opposite direction}

The following example illustrates how to find the minimal required memory for
a purely non-positive degree pearl-necklace encoder.\begin{figure}[ptb]
\begin{center}
\includegraphics[ natheight=6.679800in, natwidth=11.306600in,
width=5.5486in ]{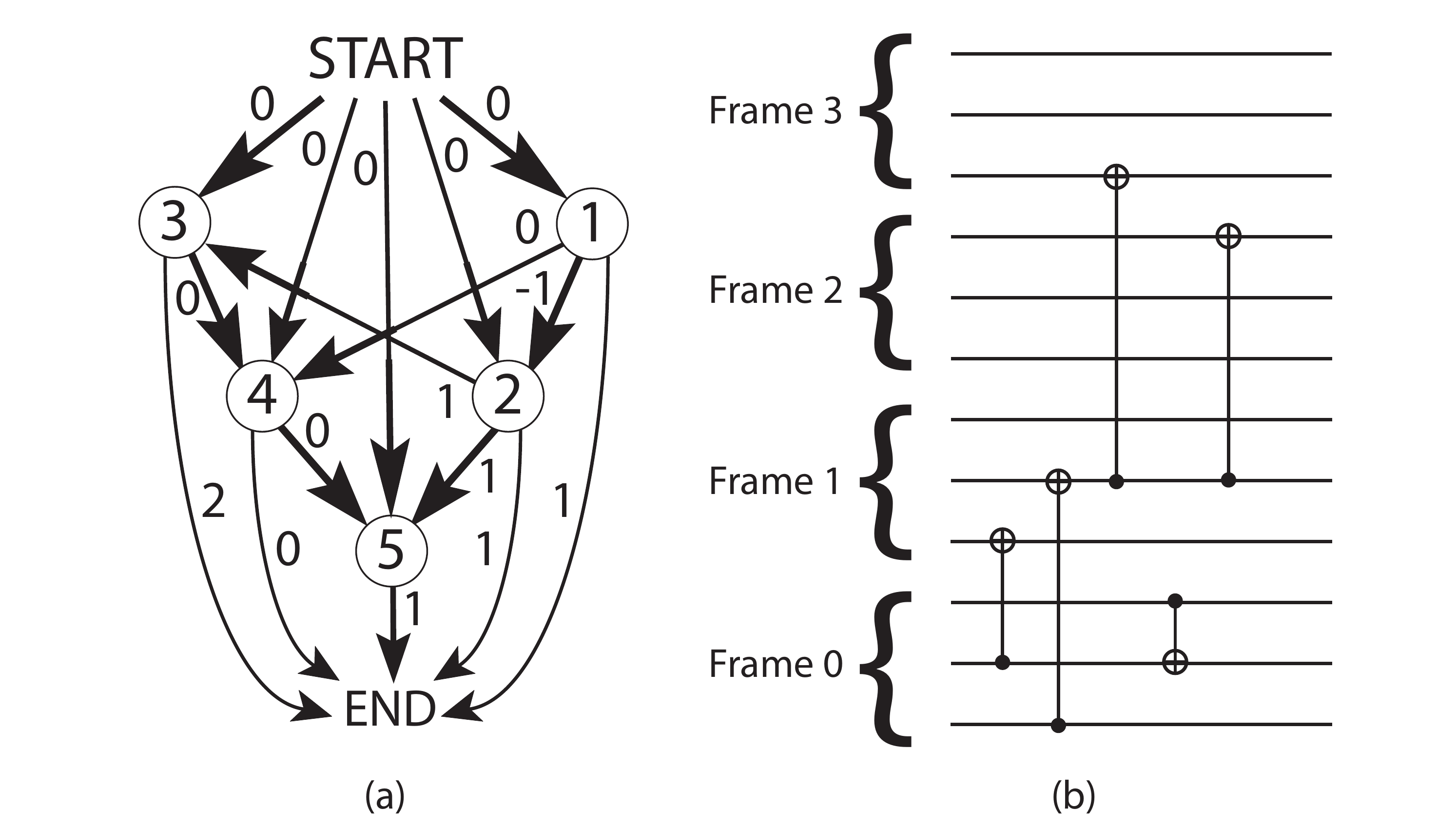}
\end{center}
\caption{(a) The commutativity graph $G^{+}$ and (b) a minimal-memory
convolutional encoder for Example~\ref{ex:negative}.}%
\label{fig:example-2}%
\end{figure}

\begin{example}
\label{ex:negative}Consider the following succession of gate strings in a
pearl-necklace encoder:
\[
\overline{\text{CNOT}}(2,3)(D^{-1})\ \overline{\text{CNOT}}(1,2)(D^{-1}%
)\ \overline{\text{CNOT}}(2,3)(D^{-2})\ \overline{\text{CNOT}}%
(1,2)(1)\ \overline{\text{CNOT}}(2,1)(D^{-1}).
\]
All gates have non-positive powers and thus are unidirectional.
Figure \ref{fig:example-2}(a) illustrates the commutativity graph for this
pearl-necklace encoder. The commutativity graph details all of the source-target
and target-source non-commutativities between gate strings. The longest path in $G^{-}$ is
\[
\text{START}\rightarrow2\rightarrow3\rightarrow\text{END,}%
\]
with its weight equal to three. The memory needed to implement the
convolutional encoder is three frames of memory qubits. From inspecting the
commutativity graph $G^{-}$, we can also determine the locations of the source qubit frame indices: $\sigma_{1}=0$,
$\sigma_{2}=0$, $\sigma_{3}=1$, $\sigma_{4}=0$, and $\sigma_{5}=1$. Figure
\ref{fig:example-2}(b) depicts a minimal-memory convolutional encoder for this example.
\end{example}

\subsection{Memory requirements for an arbitrary CNOT pearl-necklace encoder}
\label{sec:arbitrary-CNOT}
This section is the culmination of the previous two developments in
Sections~\ref{sec:unidirectional-first}\ and~\ref{sec:negative-degree}. Here,
we find a minimal-memory convolutional encoder that implements the same
transformation as a general pearl-necklace encoder with arbitrary CNOT gate strings.

Consider a pearl-necklace encoder that is a succession of several arbitrary
CNOT gate strings:%
\begin{equation}
\overline{\text{CNOT}}\left(  a_{1},b_{1}\right)  \left(  D^{l_{1}}\right)
\ \overline{\text{CNOT}}\left(  a_{2},b_{2}\right)  \left(  D^{l_{2}}\right)
\ \cdots\ \overline{\text{CNOT}}\left(  a_{N},b_{N}\right)  \left(  D^{l_{N}%
}\right)  . \label{eq:mixed}%
\end{equation}

We construct a commutativity graph $G$ in order to determine a
minimal-memory convolutional encoder. This graph is similar to those in
Sections~\ref{sec:unidirectional-first}\ and~\ref{sec:negative-degree}, but it
combines ideas from both developments. In this graph, the weight of the
longest path from the START vertex to vertex $j$ is equal to $\tau_{j}$ when
$l_{j}\geq0$, and it is equal to $\sigma_{j}$ when $l_{j}\leq0$. We consider
the constraints that the gates preceding gate $j$ impose. The constraint
inequalities use the target qubit frame index $\tau_{j}$ when $l_{j}\geq0$ and
use the source qubit frame index $\sigma_{j}$ when $l_{j}<0.$ First consider
the case when $l_{j}\geq0$. The source-target and target-source constraints
that previous gates impose on gate $j$ occur in four different ways, based on
the sign of the involved gate's degree:

\begin{enumerate}
\item There is a source-target constraint for all gates preceding gate $j$ that have non-negative degree and source-target non-commutativity with it:
\begin{align*}
\sigma_{i}  & \leq\tau_{j}\,\,\,\,\,\,\forall i\in\mathcal{S}_{j}^{+}\\
\therefore\,\,\,\tau_{i}+l_{i}  & \leq\tau_{j} \,\,\,\,\,\,\forall i\in\mathcal{S}_{j}^{+}.
\end{align*}

\item There is a source-target constraint for all gates preceding gate $j$ that have negative degree and source-target non-commutativity with it:
\[
\sigma_{i}\leq\tau_{j} \,\,\,\,\,\,\forall i\in\mathcal{S}_{j}^{-}.
\]

\item There is a target-source constraint for all gates preceding gate $j$ that have non-negative degree and target-source non-commutativity with it:
\begin{align*}
\tau_{i}  & \leq\sigma_{j}\,\,\,\,\,\,\forall i\in\mathcal{T}_{j}^{+}\\
\therefore\,\,\,\tau_{i}  & \leq\tau_{j}+l_{j}\,\,\,\,\,\,\forall i\in\mathcal{T}_{j}^{+},\\
\therefore\,\,\,\tau_{i}-l_{j}  & \leq\tau_{j}\,\,\,\,\,\,\forall i\in\mathcal{T}_{j}^{+}.
\end{align*}

\item There is a target-source constraint for all gates preceding gate $j$ that have negative degree and target-source non-commutativity with it:
\begin{align*}
\tau_{i} &  \leq\sigma_{j}\,\,\,\,\,\,\forall i\in\mathcal{T}_{j}^{-}\\
\therefore\,\,\,\sigma_{i}+\left\vert l_{i}\right\vert  &  \leq\tau_{j}+l_{j} \,\,\,\,\,\,\forall i\in\mathcal{T}_{j}^{-},\\
\therefore\,\,\,\sigma_{i}+\left\vert {l_{i}}\right\vert -l_{j} &  \leq\tau_{j}\,\,\,\,\,\,\forall i\in\mathcal{T}_{j}^{-}.
\end{align*}

\end{enumerate}

The graph includes an edge from vertex $i$ to vertex $j$, corresponding to
each of the above constraints. The target qubit frame
index $\tau_{j}$ should satisfy the following inequality, by considering the
above four inequalities:%
\begin{equation}
\max\{\{\tau_{i}+l_{i}\}_{i\in\mathcal{S}_{j}^{+}},\{\sigma_{i}\}_{i\in
\mathcal{S}_{j}^{-}},\{\tau_{i}-l_{j}\}_{i\in\mathcal{T}_{j}^{+}},\{\sigma
_{i}+|l_{i}|-l_{j}\}_{i\in\mathcal{T}_{j}^{-}}\}\leq\tau_{j}.
\end{equation}
Choosing $\tau_{j}$ so that it minimally satisfies the above constraints
results in a minimal usage of memory:%
\begin{equation}
\tau_{j}=\max\{\{\tau_{i}+l_{i}\}_{i\in\mathcal{S}_{j}^{+}},\{\sigma
_{i}\}_{i\in\mathcal{S}_{j}^{-}},\{\tau_{i}-l_{j}\}_{i\in\mathcal{T}_{j}^{+}%
},\{\sigma_{i}+|l_{i}|-l_{j}\}_{i\in\mathcal{T}_{j}^{-}}\}. \label{pos-cons1}%
\end{equation}
There is no constraint for a gate string $\overline{\text{CNOT}}(a_{j}%
,b_{j})(D^{l_{j}})$ that commutes with its previous gates:%
\begin{equation}
\tau_{j}=0. \label{pos-cons-2}%
\end{equation}
Thus choosing $\tau_{j}$ as follows when $l_{j}\geq0$ results in minimal
memory usage, based on (\ref{pos-cons1}) and (\ref{pos-cons-2}):%
\begin{equation}
\tau_{j}=\max\{0,\{\tau_{i}+l_{i}\}_{i\in\mathcal{S}_{j}^{+}},\{\sigma
_{i}\}_{i\in\mathcal{S}_{j}^{-}},\{\tau_{i}-l_{j}\}_{i\in\mathcal{T}_{j}^{+}%
},\{\sigma_{i}+\left\vert l_{i}\right\vert -l_{j}\}_{i\in\mathcal{T}_{j}^{-}%
}\}. \label{eq:tar-mixed-cons}%
\end{equation}

Now we consider the constraints that gates preceding gate $j$ impose on it
when $l_{j}<0$. There are four different non-commutativity constraints based
on the sign of the involved gate's degree:

\begin{enumerate}
\item There is a source-target constraint for all gates preceding gate $j$ that have non-negative degree and source-target non-commutativity with it:
\begin{align*}
\sigma_{i} &  \leq\tau_{j}\,\,\,\,\,\,\forall i\in\mathcal{S}_{j}^{+}\\
\therefore\,\,\,\tau_{i}+l_{i} &  \leq\sigma_{j}+\left\vert {l_{j}}\right\vert \,\,\,\,\,\,\forall i\in\mathcal{S}_{j}^{+},\\
\therefore\,\,\,\tau_{i}+l_{i}-\left\vert {l_{j}}\right\vert  &  \leq\sigma_{j}\,\,\,\,\,\,\forall i\in\mathcal{S}_{j}^{+}.
\end{align*}

\item There is a source-target constraint for all gates preceding gate $j$ that have negative degree and source-target non-commutativity with it:
\begin{align*}
\sigma_{i}  &  \leq\tau_{j}\,\,\,\,\,\,\forall i\in\mathcal{S}_{j}^{-}\\
\therefore\,\,\,\sigma_{i}  &  \leq\sigma_{j}+\left\vert {l_{j}}\right\vert \,\,\,\,\,\,\forall i\in\mathcal{S}_{j}^{-}\\
\therefore\,\,\,\sigma_{i}-\left\vert {l_{j}}\right\vert  &  \leq\sigma_{j}\,\,\,\,\,\,\forall i\in\mathcal{S}_{j}^{-}.
\end{align*}

\item There is a target-source constraint for all gates preceding gate $j$ that have non-negative degree and target-source non-commutativity with it:
\[
\tau_{i}\leq\sigma_{j}\,\,\,\,\,\,\forall i\in\mathcal{T}_{j}^{+}.
\]

\item There is a target-source constraint for all gates preceding gate $j$ that have negative degree and target-source non-commutativity with it:
\begin{align*}
\tau_{i}  &  \leq\sigma_{j}\,\,\,\,\,\,\forall i\in\mathcal{T}_{j}^{+},\\
\therefore\,\,\,\sigma_{i}+\left\vert {l_{i}}\right\vert  &  \leq\sigma_{j}\,\,\,\,\,\,\forall i\in\mathcal{T}_{j}^{+}.
\end{align*}

\end{enumerate}
For similar reasons as above, choosing $\sigma_{j}$ as follows results in
minimal memory usage when $l_{i}<0$:%
\begin{equation}
\sigma_{j}=\max\{0,\{\tau_{i}+l_{i}-\left\vert l_{j}\right\vert\}_{i\in\mathcal{S}_{j}^{+}},\{\sigma
_{i}-\left\vert l_{j}\right\vert\}_{i\in\mathcal{S}_{j}^{-}},\{\tau_{i}\}_{i\in\mathcal{T}_{j}^{+}%
},\{\sigma_{i}+\left\vert l_{i}\right\vert\}_{i\in\mathcal{T}_{j}^{-}%
}\}. \label{eq:sou-mixed-cons}%
\end{equation}

A search through the constructed commutativity graph $G$ can find the values
in (\ref{eq:tar-mixed-cons}) and (\ref{eq:sou-mixed-cons}).
Algorithm~\ref{alg:nc-graph-mixed} below gives the pseudo code for
constructing the commutativity graph $G$. The graph $G$ consists of $N$ vertices,
labeled $1,2,\cdots,N$, and vertex $j$
corresponds to $j^{\text{th}}$ gate string$~\overline{\text{CNOT}}(a_{j}%
,b_{j})(D^{l_{j}})$ in the pearl-necklace encoder.

\begin{algorithm}
\caption{Algorithm for determining the commutativity graph $G$ in mixed case}
\label{alg:nc-graph-mixed}
\begin{algorithmic}
\STATE$N \gets$ Number of gate strings in the pearl-necklace encoder
\STATE Draw a {\bf START} vertex.
\FOR{$j := 1$ to $N$}
\STATE Draw a vertex labeled $j$ for the $j^{\text{th}}$ encoding operation, $\overline{\text{CNOT}}(a_{j},b_{j})(D^{l_j})$
\STATE DrawEdge({\bf START}, $j$, $0$)
\FOR{$i$ := $1$ to $j-1$}
\IF{$l_j\geq0$ AND $l_i\geq0$}
\IF{Source-Target$\left(\overline{\text{CNOT}}(a_{i},b_{i})(D^{l_i}),\overline{ \text{CNOT}}(a_{j},b_{j})(D^{l_j})\right) = {\bf TRUE}$}
\STATE DrawEdge($i,j,l_i$)
\ELSIF{Target-Source$\left(\overline{\text{CNOT}}(a_{i},b_{i})(D^{l_i}),\overline{ \text{CNOT}}(a_{j},b_{j})(D^{l_j})\right) = {\bf TRUE} $}
\STATE DrawEdge($i,j,-l_j$)
\ENDIF
\ELSIF{$l_j\geq0$ AND $l_i < 0$}
\IF{Source-Target$\left(\overline{\text{CNOT}}(a_{i},b_{i})(D^{l_i}),\overline{ \text{CNOT}}(a_{j},b_{j})(D^{l_j})\right) = {\bf TRUE} $}
\STATE DrawEdge($i,j,0$)
\ENDIF
\IF{Target-Source$\left(\overline{\text{CNOT}}(a_{i},b_{i})(D^{l_i}),\overline{ \text{CNOT}}(a_{j},b_{j})(D^{l_j})\right) = {\bf TRUE} $}
\STATE DrawEdge($i,j,|l_i|-l_j$)
\ENDIF
\ELSIF{$l_j < 0$ AND $l_i \geq 0$}
\IF{Source-Target$\left(\overline{\text{CNOT}}(a_{i},b_{i})(D^{l_i}),\overline{ \text{CNOT}}(a_{j},b_{j})(D^{l_j})\right) = {\bf TRUE} $}
\STATE DrawEdge($i,j,l_i-|l_j|$)
\ENDIF
\IF{Target-Source$\left(\overline{\text{CNOT}}(a_{i},b_{i})(D^{l_i}),\overline{ \text{CNOT}}(a_{j},b_{j})(D^{l_j})\right) = {\bf TRUE} $}
\STATE DrawEdge($i,j,0$)
\ENDIF
\ELSIF{$l_j < 0$ AND $l_i < 0$}
\IF{Target-Source$\left(\overline{\text{CNOT}}(a_{i},b_{i})(D^{l_i}),\overline{ \text{CNOT}}(a_{j},b_{j})(D^{l_j})\right) = {\bf TRUE} $}
\STATE DrawEdge($i,j,|l_i|$)
\ELSIF{Source-Target$\left(\overline{\text{CNOT}}(a_{i},b_{i})(D^{l_i}),\overline{ \text{CNOT}}(a_{j},b_{j})(D^{l_j})\right) = {\bf TRUE} $}
\STATE DrawEdge($i,j,-|l_j|$)
\ENDIF
\ENDIF
\ENDFOR
\ENDFOR
\STATE Draw an {\bf END} vertex.
\FOR{$j$ := $1$ to $N$}
\STATE DrawEdge($j$, {\bf END}, $|l_j|$)
\ENDFOR
\end{algorithmic}
\end{algorithm}

\subsubsection{The longest path gives the minimal memory requirements}

Theorem~\ref{thm:final-theorem} states that the
weight of the longest path from the START vertex to END vertex in $G$ is equal
to the minimal required memory for the encoding sequence in (\ref{eq:mixed}).
Part of its proof follows from Lemma~\ref{lem:final-lemma}.

\begin{theorem}
\label{thm:final-theorem}The weight of the longest path from the START vertex
to the END vertex in the commutativity graph $G$ is equal to the memory
required for the convolutional encoder.
\end{theorem}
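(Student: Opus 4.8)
The plan is to imitate the two-step structure used in Theorem~\ref{thm:longest-path-is-memory-+}, but with the single bookkeeping quantity $w_j$ now tracking two different physical indices depending on the sign of $l_j$. Concretely, I would first establish (this is the content I expect Lemma~\ref{lem:final-lemma} to supply) the case-dependent invariant that, for every vertex $j$, the weight $w_j$ of the longest path from the START vertex to vertex $j$ in $G$ satisfies $w_j=\tau_j$ when $l_j\geq0$ and $w_j=\sigma_j$ when $l_j<0$. This is the natural common generalization of $w_j=\tau_j$ from Theorem~\ref{thm:longest-path-is-memory-+} and of $w_j=\sigma_j$ from the non-positive case, and it is exactly the quantity that Algorithm~\ref{alg:nc-graph-mixed} is engineered to compute.

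I would prove the invariant by induction on $j$, exactly as in Theorem~\ref{thm:longest-path-is-memory-+}. For the base case, the only path to vertex $1$ is the zero-weight START edge, so $w_1=0$, which matches $\tau_1=0$ or $\sigma_1=0$ obtained from (\ref{eq:tar-mixed-cons}) or (\ref{eq:sou-mixed-cons}) when the preceding-gate sets are empty. For the inductive step I would assume the invariant for vertices $1,\dots,k$ and examine the edges that Algorithm~\ref{alg:nc-graph-mixed} draws into vertex $k+1$. Because the longest path to $k+1$ is the maximum, over all incoming edges $(i,k+1)$ of weight $\omega_{i,k+1}$, of $w_i+\omega_{i,k+1}$, together with the zero-weight START edge, it suffices to check that the eight edge-weight rules of the algorithm, combined with the inductive substitution $w_i=\tau_i$ (for $l_i\geq0$) or $w_i=\sigma_i$ (for $l_i<0$), reproduce termwise the arguments of the maximum in (\ref{eq:tar-mixed-cons}) when $l_{k+1}\geq0$ and in (\ref{eq:sou-mixed-cons}) when $l_{k+1}<0$. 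For instance, in the branch $l_{k+1}\geq0,\ l_i<0$ with target-source non-commutativity the edge weight is $|l_i|-l_{k+1}$, which contributes $w_i+|l_i|-l_{k+1}=\sigma_i+|l_i|-l_{k+1}$, precisely the $\mathcal{T}_{k+1}^{-}$ term; every other branch matches its corresponding term in the same way. This verification is a routine but lengthy case analysis, so it is where I would be most careful about the sign conventions ($|l_j|=-l_j$ when $l_j<0$).

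With the invariant in hand, the theorem follows by appending the END edges, whose weights are $|l_j|$. Thus the weight of the longest START-to-END path is $w=\max_{1\leq j\leq N}\{w_j+|l_j|\}$. Using $\sigma_j=\tau_j+l_j$ I would observe that $w_j+|l_j|=\tau_j+l_j=\sigma_j$ when $l_j\geq0$ and $w_j+|l_j|=\sigma_j+|l_j|=\tau_j$ when $l_j<0$; in both cases this equals $\max\{\sigma_j,\tau_j\}$, the largest frame index that gate $j$ occupies. Hence $w=\max_{1\leq j\leq N}\max\{\sigma_j,\tau_j\}$ is the largest frame index touched by any gate of the convolutional encoder, which (frames being numbered from zero) is exactly the minimal number of frames of memory qubits it requires.

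The step I expect to be the genuine obstacle is not the algebra of the inductive step but justifying that this $w$ really is \emph{minimal}: the equalities (\ref{eq:tar-mixed-cons}) and (\ref{eq:sou-mixed-cons}) were derived as the smallest indices consistent with all source-target and target-source constraints imposed by preceding gates, so I would invoke that derivation to argue that no correct convolutional encoder realizing (\ref{eq:mixed}) can use fewer frames, while the explicit assignment defined by those equalities shows the bound is achieved. I would also double-check the tacit claim that $\max\{\sigma_j,\tau_j\}$ correctly measures the memory contribution of a gate whose source and target straddle both sides of the current frame, which is the one place where the mixed case differs qualitatively from the two unidirectional cases.
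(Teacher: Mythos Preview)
Your proposal is correct and follows essentially the same approach as the paper: the paper proves the invariant you describe as Lemma~\ref{lem:final-lemma} by the same induction, and Theorem~\ref{thm:final-theorem} then appends the END edges and simplifies $w=\max_j\{w_j+|l_j|\}$ to $\max\{\{\sigma_j\}_{l_j\geq0},\{\tau_j\}_{l_j<0}\}$ exactly as you do. Your additional remarks on minimality and on the $\max\{\sigma_j,\tau_j\}$ interpretation go slightly beyond what the paper makes explicit, but the core argument is the same.
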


\begin{proof}
The longest path in the graph is the maximum of the longest path to each
vertex summed with the weight of the edge from each vertex to the end vertex:%
\[
w=\max\{w_{j}+|l_{j}|\}_{j\in\{1,2,\cdots,N\}},
\]
The following relation holds by applying Lemma~\ref{lem:final-lemma} below:%
\[
w=\max\{\{\mathbb{\tau}_{j}+l_{j}\}_{l_{j}\geq0},\{\mathbb{\sigma}_{j}%
+|l_{j}|\}_{l_{j}<0}\},
\]
which is equal to%
\[
w=\max\{\{\mathbb{\sigma}_{j}\}_{l_{j}\geq0},\{\mathbb{\tau}_{j}\}_{l_{j}%
<0}\}.
\]
The quantity $\max\{\{\mathbb{\sigma}_{j}\}_{l_{j}\geq0},\{\mathbb{\tau}%
_{j}\}_{l_{j}<0}\}$ is equal to the memory requirement of a minimal-memory
convolutional encoder, so the theorem holds.
\end{proof}

\begin{lemma}
\label{lem:final-lemma}The weight of the longest path from the START vertex to
vertex $j$ in $G,$ $w_{j}$ is equal to
\[
w_{j}=\left\{
\begin{array}
[c]{ccc}%
\tau_{j} & : & l_{j}\geq0\\
\sigma_{j} & : & l_{j}<0
\end{array}
\right.  .
\]

\end{lemma}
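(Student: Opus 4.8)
The plan is to mirror the inductive argument used in the proof of Theorem~\ref{thm:longest-path-is-memory-+}, but now carrying two possible values at each vertex simultaneously, namely $\tau_j$ when $l_j\geq 0$ and $\sigma_j$ when $l_j<0$. I would induct on the vertex index $j$. For the base case, the START vertex is joined to vertex~$1$ by a single zero-weight edge, so $w_1=0$; since vertex~$1$ has no predecessors, the constraint formula (\ref{eq:tar-mixed-cons}) gives $\tau_1=0$ when $l_1\geq 0$ and (\ref{eq:sou-mixed-cons}) gives $\sigma_1=0$ when $l_1<0$, which matches $w_1$ in either case.

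For the inductive step, I would assume that $w_i$ equals $\tau_i$ (if $l_i\geq 0$) or $\sigma_i$ (if $l_i<0$) for every $i\leq k$, add vertex $k+1$, and compute $w_{k+1}$ as the maximum, over all incoming edges, of the longest path to the edge's source plus the edge weight, together with the zero-weight edge from START. The entire content of the step is to check, case by case on the sign of $l_{k+1}$, that the edge weights laid down by Algorithm~\ref{alg:nc-graph-mixed} convert---via the inductive hypothesis---each incoming term into exactly the corresponding term of the relevant constraint formula.

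Concretely, when $l_{k+1}\geq 0$ I would read off from the first two branches of Algorithm~\ref{alg:nc-graph-mixed} that an edge from $i\in\mathcal{S}_{k+1}^{+}$ carries weight $l_i$, an edge from $i\in\mathcal{T}_{k+1}^{+}$ carries weight $-l_{k+1}$, an edge from $i\in\mathcal{S}_{k+1}^{-}$ carries weight $0$, and an edge from $i\in\mathcal{T}_{k+1}^{-}$ carries weight $|l_i|-l_{k+1}$. Substituting $w_i=\tau_i$ on the two non-negative-degree sets and $w_i=\sigma_i$ on the two negative-degree sets turns these four contributions into $\tau_i+l_i$, $\tau_i-l_{k+1}$, $\sigma_i$, and $\sigma_i+|l_i|-l_{k+1}$ respectively; taking the maximum of these with $0$ reproduces precisely (\ref{eq:tar-mixed-cons}), so $w_{k+1}=\tau_{k+1}$. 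The case $l_{k+1}<0$ is symmetric: the last two branches of the algorithm assign weights $l_i-|l_{k+1}|$, $0$, $-|l_{k+1}|$, and $|l_i|$ to edges coming from $\mathcal{S}_{k+1}^{+}$, $\mathcal{T}_{k+1}^{+}$, $\mathcal{S}_{k+1}^{-}$, and $\mathcal{T}_{k+1}^{-}$, and the same substitution recovers (\ref{eq:sou-mixed-cons}), giving $w_{k+1}=\sigma_{k+1}$.

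The step I expect to demand the most care is the bookkeeping in this case analysis: one must match each of the four sign-combination branches of Algorithm~\ref{alg:nc-graph-mixed} to the correct set among $\mathcal{S}_{k+1}^{+},\mathcal{S}_{k+1}^{-},\mathcal{T}_{k+1}^{+},\mathcal{T}_{k+1}^{-}$ and keep the $l_i$ versus $|l_i|$ (and $l_{k+1}$ versus $|l_{k+1}|$) conventions consistent, since these flip according to whether the predecessor and the current gate string have non-negative or negative degree. Once this correspondence is verified, the agreement of the $\max\{0,\dots\}$ in the longest-path computation with the constraint formulas is immediate termwise, and the induction closes exactly as in Theorem~\ref{thm:longest-path-is-memory-+}.
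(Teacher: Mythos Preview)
Your proposal is correct and follows essentially the same inductive argument as the paper: establish the base case $w_1=0$ by the zero-weight START edge, then in the inductive step read off the edge weights that Algorithm~\ref{alg:nc-graph-mixed} lays down and substitute the inductive hypothesis to recover (\ref{eq:tar-mixed-cons}) or (\ref{eq:sou-mixed-cons}). In fact you are slightly more explicit than the paper in the $l_{k+1}<0$ case, where the paper simply writes ``in a similar way'' without listing the four edge weights; your enumeration of them is accurate.
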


\begin{proof}
We prove the lemma by induction. The weight $w_{1}$ of the path to the first
vertex is equal to zero because a zero-weight edge connects the START vertex
to the first vertex. If $l_{1}\geq0$, then $\tau_{1}=0$. So $w_{1}=\tau_{1}=0$
and if $l_{1}<0$, then $\sigma_{1}=0$. So $w_{1}=\sigma_{1}=0$. Therefore the
lemma holds for the first gate.

Suppose the lemma holds for the first $k$ gates:%
\[
\forall i\in\{1,\cdots,k\},\ \ \ \ w_{i}=\left\{
\begin{array}
[c]{ccc}%
\tau_{i} & : & l_{i}\geq0\\
\sigma_{i} & : & l_{i}<0
\end{array}
\right.  .
\]
Consider adding a $(k+1)^{\text{th}}$ gate string $\overline{\text{CNOT}}(a_{k+1}%
,b_{k+1})(D^{l_{k+1}})$ to the pearl-necklace encoder.
Algorithm~\ref{alg:nc-graph-mixed} then adds a vertex with label $k+1$ to the
graph. First consider the case that $l_{k+1}\geq0$.
Algorithm~\ref{alg:nc-graph-mixed} then adds the following edges to the graph
$G$:

\begin{enumerate}
\item A zero-weight edge from the START vertex to vertex $k+1$.

\item An $l_{k+1}$-weight edge from vertex $k+1$ to the END vertex.

\item An $l_{i}$-weight edge from each vertex $\{i\}_{i\in\mathcal{S}%
_{k+1}^{+}}$ to vertex $k+1$.

\item A zero-weight edge from each vertex $\{i\}_{i\in\mathcal{S}_{k+1}^{-}}$
to vertex $k+1$.

\item A $-l_{k+1}$-weight edge from each vertex $\{i\}_{i\in\mathcal{T}%
_{k+1}^{+}}$ to vertex $k+1$.

\item A $\left\vert l_{i}\right\vert -l_{k+1}$-weight edge from each vertex
$\{i\}_{i\in\mathcal{T}_{k+1}^{-}}$ to vertex $k+1$.
\end{enumerate}

The weight of the longest path from the START vertex to vertex $k+1$ is then
as follows:
\begin{align}
w_{k+1} &  =\max\{0,\{w_{i}+l_{i}\}_{i\in\mathcal{S}_{k+1}^{+}},\{w_{i}%
\}_{i\in\mathcal{S}_{k+1}^{-}},\{w_{i}-l_{k+1}\}_{i\in\mathcal{T}_{k+1}^{+}%
},\{w_{i}+|l_{i}|-l_{k+1}\}_{i\in\mathcal{T}_{k+1}^{-}}\}\nonumber\\
&  =\max\{0,\{\mathbb{\tau}_{i}+l_{i}\}_{i\in\mathcal{S}_{k+1}^{+}%
},\{\mathbb{\sigma}_{i}\}_{i\in\mathcal{S}_{k+1}^{-}},\{\mathbb{\tau
}_{i}-l_{k+1}\}_{i\in\mathcal{T}_{k+1}^{+}},\{\mathbb{\sigma}_{i}%
+|l_{i}|-l_{k+1}\}_{i\in\mathcal{T}_{k+1}^{-}}%
\}.\label{eq:G-pos-tar-constraint}%
\end{align}
The following relation follows by applying (\ref{eq:tar-mixed-cons}) and
(\ref{eq:G-pos-tar-constraint}) when $l_{k+1}\geq0$:%
\[
w_{k+1}=\tau_{k+1},
\]
In a similar way, we can prove that%
\[
w_{k+1}=\sigma_{k+1},
\]
when $l_{k+1}<0$ and this last step concludes the proof.
\end{proof}

The complexity of constructing the graph $G$ is $O(N^{2})$ (the argument is
similar to before), and dynamic programming finds the longest path in $G$ in
time linear in the number of its vertices and edges because $G$\ is a
weighted, directed acyclic graph.
\subsubsection{Example of a pearl-necklace encoder with arbitrary CNOT gates}

We conclude the final development with an example.
\begin{figure}[ptb]
\begin{center}
\includegraphics[ natheight=6.679800in, natwidth=11.313500in,
width=5.086in ]
{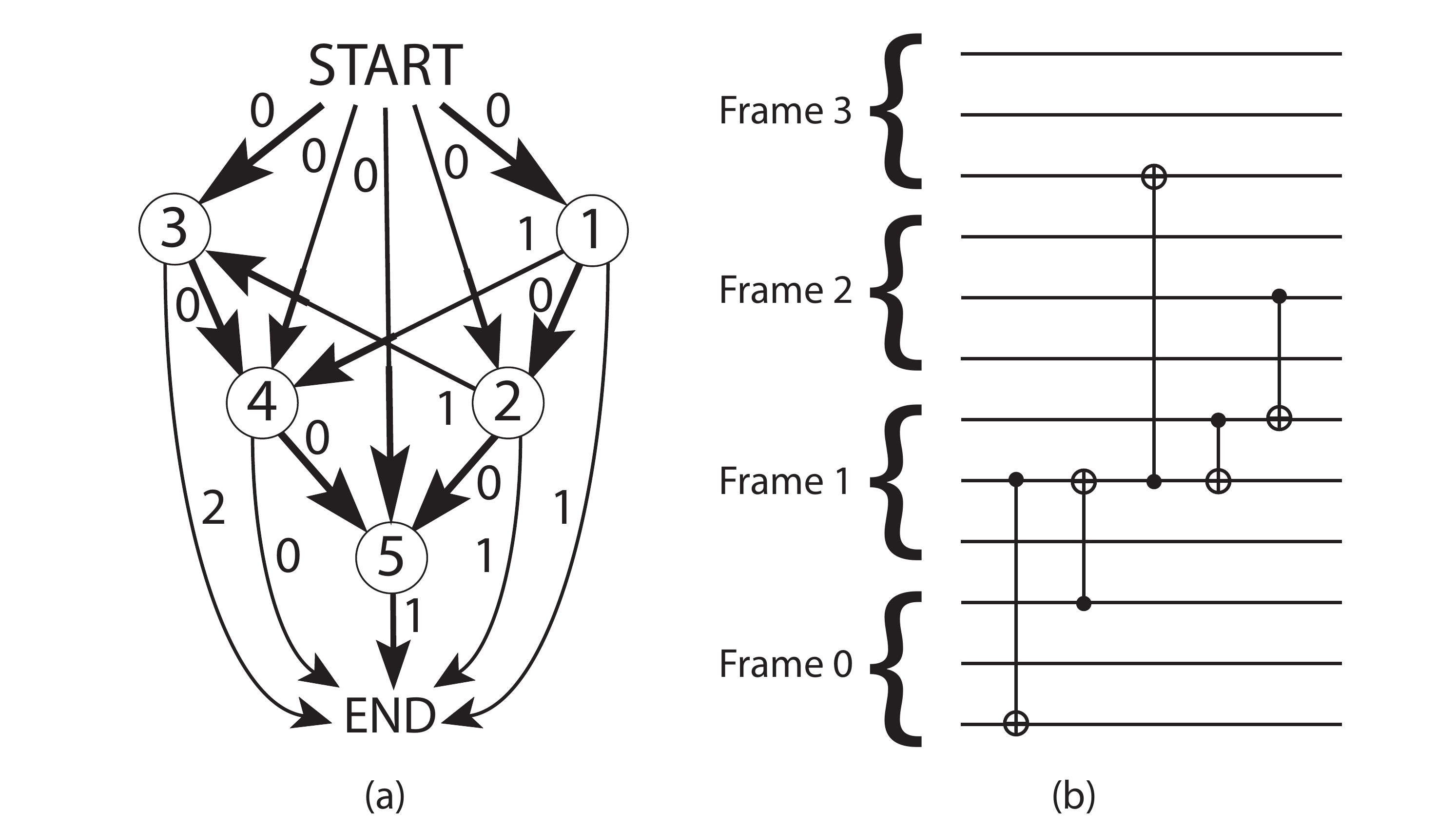}
\end{center}
\caption{(a) The commutativity graph $G^{+}$ and (b) a minimal-memory
convolutional encoder for Example~\ref{ex:mixed}.}%
\label{fig:example-3}%
\end{figure}
\begin{example}
\label{ex:mixed}Consider the following succession of gate strings in a pearl-necklace encoder:%
\[
\overline{\text{CNOT}}(2,3)(D)\ \overline{\text{CNOT}}(1,2)(D^{-1}%
)\ \overline{\text{CNOT}}(2,3)(D^{-2})\ \overline{\text{CNOT}}%
(1,2)(1)\ \overline{\text{CNOT}}(2,1)(D).
\]
Figure \ref{fig:example-3}(a) illustrates $G$ for the above example. The
longest path is
\[
\text{START}\rightarrow2\rightarrow3\rightarrow\text{END}%
\]
with its weight equal to three. Thus, the minimal-memory convolutional encoder
requires three frames of memory qubits. Also, from inspecting the graph $G$, we can determine the source qubit and target qubit frame indices in the convolutional encoder:
$\tau_{1}=0$, $\sigma_{2}=0$, $\sigma_{3}=1$, $\tau=1$, and $\tau_{5}=1$.
Figure~\ref{fig:example-3}(b) depicts a minimal-memory convolutional encoder.
\end{example}

\section{Conclusion}

We have shown how to realize a minimal-memory convolutional encoder that
implements the same transformation as a pearl-necklace encoder with arbitrary
CNOT\ gate strings. Our approach is to construct a dependency graph whose directed
edges represent non-commutative relations between gate strings in
the pearl-necklace encoder. Determining the minimal memory is then the same
task as determining the longest path through this graph. The algorithm for
constructing and searching the graph requires time at worst quadratic
in the number of gate strings in the pearl-necklace encoder. This technique
should be useful when we have a pearl-necklace encoder description, which is
the case in the work of Grassl and
R\"{o}tteler~\cite{isit2006grassl,ieee2007grassl} and later work on
entanglement-assisted quantum convolutional coding~\cite{arx2007wildeEAQCC}.\cite{arx2007wildeCED,arx2007wildeEAQCC,arx2008wildeUQCC,arx2008wildeGEAQCC,pra2009wilde}

A later paper includes the general case of the algorithm for
pearl-necklace encoders with gate strings other than CNOT\ gate strings~\cite{general-paper}. The
extension of the algorithm will include all gate strings that are in the
shift-invariant Clifford group~\cite{isit2006grassl,PhysRevA.79.062325},
including Hadamard gates, phase gates, two variants of the controlled-phase
gate string, and infinite-depth CNOT operations.

There might be ways to determine convolutional encoders with even lower memory
requirements by using techniques different from those given here. First, our algorithm begins with
a particular pearl-necklace encoder, i.e., a particular succession of gate
strings to implement. One could first perform an optimization over all
possible pearl-necklace encoders of a quantum convolutional code because there
are many pearl-necklace encoders for a particular quantum convolutional code.
Perhaps even better, one could look for a method to construct a repeated
unitary directly from the polynomial description of the code itself. Ollivier
and Tillich have considered such an approach in Sections~2.3 and 3 of
Ref.~\cite{arxiv2004olliv}, but it is not clear that their technique is
attempting to minimize the memory resources for the encoder. There are
well-developed techniques in the classical world to determine minimal-memory
encoders. Ideally, we would like to have a \textquotedblleft
quantization\textquotedblright\ of Theorem~2.22 of Ref.~\cite{book1999conv}.


The authors acknowledge useful discussions with Patrick Hayden and Pranab Sen.
MMW acknowledges support from the MDEIE (Qu\'{e}bec) PSR-SIIRI international collaboration grant.

\begin{IEEEbiographynophoto}{Monireh Houshmand} was born in Mashhad, Iran. She got her B.S. and M.S. degrees in Electrical Engineering from Ferdowsi University of Mashhad, in 2005 and 2007, respectively. She is currently a Ph.D. student of electrical engineering in Ferdowsi University of Mashhad. Her research interests include quantum error correction and quantum cryptography.
\end{IEEEbiographynophoto}

\begin{IEEEbiographynophoto}{Saied Hosseini Khayat} (Member, IEEE) was born in Tehran, Iran. He received the BS degree in electrical engineering from Shiraz University, Iran (1986), and the MS and PhD degrees in electrical engineering from Washington University in St. Louis, USA (1991 and 1997). He held R\&D positions at Globespan Semiconductor
and Erlang Technologies.
His expertise and interests lie in the area of digital system design for communications, networking and cryptography. At present, he is an assistant professor of electrical engineering at Ferdowsi University of Mashhad.
\end{IEEEbiographynophoto}

\begin{IEEEbiographynophoto}{Mark M. Wilde} (M'99) was born in Metairie, Louisiana.
He obtained a B.S.~in computer engineering from
Texas A\&M University in 2002,
an M.S.~in electrical engineering from Tulane University in 2004, and the Ph.D.~in electrical engineering
from the University of Southern California
in 2008. Presently, he is a postdoctoral fellow in the School of Computer Science at
McGill University in Montreal.
His current research interests are in
quantum Shannon theory and quantum error correction.
\end{IEEEbiographynophoto}

\end{document}